\newtheorem{theorem}{Theorem}
\newtheorem{definition}{Definition}
\newtheorem{Remark}{Remark}
\newtheorem{lemma}{Lemma}
\newtheorem{corollary}{Corollary}
\renewcommand{\vec}[1]{\mathbf{#1}}
\def\blfootnote{\xdef\@thefnmark{}\@footnotetext}
\begin{document}
	
		\title{The Impact of Side Information on Physical Layer Security under Correlated Fading Channels} 
	\author{Farshad~Rostami~Ghadi\IEEEmembership{}, F.~Javier~L{\'o}pez-Mart{\'i}nez, Wei-Ping~Zhu, and Jean-Marie~Gorce  \IEEEmembership{}\IEEEmembership{}}
	\maketitle
	\begin{abstract}
		In this paper, we investigate the impact of side information (SI) on the performance of physical layer security (PLS) under correlated fading channels. By considering non-causally known SI at the transmitter and exploiting the copula technique to describe the fading correlation, we \textcolor{blue}{derive} closed-from expressions for the average secrecy capacity (ASC) and secrecy outage probability (SOP) under
		positive/negative dependence conditions. We indicate that considering such knowledge at the transmitter is beneficial for system performance and ensures reliable communication with higher rates, as it improves the SOP and brings higher values of the ASC.
	\end{abstract}
	\begin{IEEEkeywords}
		Physical layer security, side information, average secrecy capacity, secrecy outage probability, correlated fading.  
	\end{IEEEkeywords}
	\maketitle
	\blfootnote{\noindent Manuscript received December 21, 2021; revised August XX, 2022. This work has been funded in part by the European Fund for Regional Development FEDER and by Junta de Andalucia (project P18-RT-3175). The review of this paper was coordinated by Prof. Kai Zeng.}
	
	\blfootnote{\noindent F.R. Ghadi and F.J. L{\'o}pez Mart{\'i}nez are  with the Communications and Signal Processing Lab, Telecommunication Research Institute (TELMA), Universidad de M{\'a}laga, M{\'a}laga, 29010, (Spain). F.J. L{\'o}pez Mart{\'i}nez is also with the Dept. Signal Theory, Networking and Communications, University of Granada, 18071, Granada (Spain). (e-mail: $\rm farshad@ic.uma.es$, $\rm fjlm@ugr.es$).}
	
	\blfootnote{\noindent W.P. Zhu is  with Department of Electrical and Computer Engineering, Concordia University, Montreal, QC H3G 1M8, Canada (e-mail: $\rm weiping@ece.concordia.ca$).}
	
	\blfootnote{\noindent Jean-Marie Gorce is with the laboratoire CITI
	(a joint laboratory between the Universit\'e de Lyon, INRIA, and INSA de Lyon). 6 Avenue des Arts, F-69621, Villeurbanne, France (e-mail: $\rm jean-marie.gorce@inria.fr$).}
	\blfootnote{Digital Object Identifier 10.1109/XXX.2021.XXXXXXX}
	\vspace{-3mm}
	\section{Introduction}\label{introduction}
The broadcast nature and the inherent randomness of wireless channels have always made information security vulnerable to eavesdropping and jamming attacks. Therefore, the issues of reliability and security are momentous challenges in designing future wireless networks such as sixth-generation (6G) technology \cite{matthaiou2021road,mucchi2021physical,porambage2021roadmap}. One of the alternative approaches to protect information from unauthorized access and guarantee secure communication instead of applying traditional cryptographic algorithms 
is physical layer security (PLS). The principle of PLS was first proposed by Shannon \cite{shannon1949communication} and then studied by Wyner \cite{wyner1975wire} for a basic wiretap channel exploiting Shannon's notion of perfect secrecy. In Wyner's proposed model, a transmitter wants to send a confidential message to a legitimate receiver through a discrete memoryless channel (DMC) while the eavesdropper attempts to decode the message. In this state, Wyner defined the maximum rate of reliable communication from the transmitter to the legitimate receiver as the secrecy capacity (SC), and showed that perfect secrecy could be achieved when the legitimate receiver has a better channel than the eavesdropper. Later, Csisz{\'a}r and K{\"o}rner extended Wyner's result to non-degraded broadcast channels with confidential messages and showed that positive SC is always achievable if the main channel (transmitter-to-legitimate receiver) is less noisy \cite{csiszar1978broadcast}. Leung-Yan-Cheong and Hellman also generalized Wyner's studies to the Gaussian wiretap channel
and defined the SC as the difference between capacities of
the main and eavesdropper channels (transmitter-to-eavesdropper) \cite{leung1978gaussian}. On the other hand, due to the provable effects of the side information (SI) in reducing destructive effects of the interference and guaranteeing reliable communication with higher rates, Mitrpant \textit{et al.} \cite{mitrpant2006achievable} analyzed the Gaussian wiretap channel with SI. Besides, by considering the Gaussian channel with known SI at the transmitter, Costa \cite{costa1983writing} described a dirty paper model in order to examine how much information can be reliably sent, assuming that the recipient cannot distinguish between ink and dirt. Costa's analysis showed that the dirty channel model has the same capacity as the Gaussian model so that SI does not affect the channel capacity. Exploiting the similar approach of writing on dirty paper, Chen and Vink \cite{chen2008wiretap} studied the impact of SI on the Gaussian wiretap channel to find out how much secret information can be reliably and securely sent to the legitimate receiver without leaking information about the secret message to the eavesdropper. They showed that the SI at transmitter provides a larger SC and guarantees a more secure communication. Moreover, Chia and El Gamal \cite{chia2012wiretap} provided a lower bound of SC for the wiretap channel with SI available causally at both encoder and decoder, where  they showed the lower bound in this case is strictly larger
than that for the non-causal case obtained by Liu and Chen \cite{liu2007wiretap}. \textcolor{blue}{However, it is worth noting that the assumption of non-causally available SI at the transmitter is inherent to the dirty paper channel introduced by Costa \cite{costa1983writing}, and is a key building block in information theory. According to Jafar's work \cite{jafar2006capacity}, capacity advantage of non-causal SI overcausal SI is bounded by the number of \textit{genie bits} required to make the transmitter SI available to the receiver as well, which can be none in some cases, e.g., when SI at the transmitter is a deterministic function of the SI at the receiver. \cite{caire1999capacity}. In any case, the consideration of non-causal SI is of interest for bench-marking purposes as a reference scenario, compared to the case with causal SI.}

From the physical layer viewpoint, the received signal at the legitimate receiver and eavesdropper are different due to propagation environment effects such as large-scale and small-scale fading. Furthermore, the main channel and eavesdropper channel are practically correlated due to the physical limitation of antenna spacing or one physical environment, the proximity of the legitimate receiver and eavesdropper, and the presence or absence of scatters around them \cite{shiu2000fading}. For this purpose, several works have analyzed secrecy metrics of PLS over various correlated fading \textit{blank} (i.e., without SI) wiretap channels in recent years: The upper bound of SC and the asymptotic behavior of outage probability for a correlated Rayleigh fading wiretap channel were respectively studied in \cite{jeon2011bounds}, \cite{zhu2013effects}; closed-form expressions for the average secrecy capacity (ASC) and secrecy outage probability (SOP) over correlated Log-normal fading channels were obtained in \cite{pan2015physical}; SOP performance over correlated composite Nakagami-$m$/Gamma fading channels including shadowing and multi-path fading was investigated in \cite{alexandropoulos2017secrecy}; compact expressions for the ASC and SOP in correlated $\alpha-\mu$ fading channels were derived in \cite{mathur2019secrecy}. Moreover, only recently, the effect of fading correlation on the performance of PLS by exploiting copula theory was addressed in \cite{ghadi2020copula} and \cite{besser2020bounds} again for the case of blank wiretap channels. Copula theory is a plausible approach to incorporate arbitrary dependence structures, that has recently become quite popular in the context of performance analysis of wireless communication systems \cite{9464253,gholizadeh2015capacity,huang2015copula,ghadi2020copula1,zheng2019copula,jorswieck2020copula}. In \cite{ghadi2020copula}, the authors derived closed-form expressions for the ASC, SOP, and secrecy coverage region (SCR) over correlated Rayleigh fading channel using Farlie-Gumbel-Morgenstern (FGM) copula, while the authors in \cite{besser2020bounds} represented the upper and lower bound of SOP for the same fading channel. 

Motivated by the significant role of SI in providing a larger SC and rate equivocation region in wiretap channels, in this paper, we extend the dirty paper model of the Gaussian wiretap channel considered in \cite{chen2008wiretap}, to wireless fading communications, in order to understand the behavior of important secrecy performance metrics under the assumption of SI at the transmitter. 
We also generalize \cite{ghadi2020copula} by considering a \textit{correlated} wiretap fading channel and generate arbitrary multivariate coefficients of the main and eavesdropper channels by copula theory. Then, considering non-causally known SI at the transmitter under correlated Rayleigh fading wiretap channel, we exemplify how closed-form expressions for the ASC and SOP can be obtained for the case of using the FGM copula.  
Therefore, the main contributions of our work are summarized
as follows:

\textbullet~We provide an information-theoretical copula-based formulation of the secure communication model over correlated wireless fading channels assuming non-casually SI at the transmitter and review the concept of copula theory and corresponding points.

\textbullet~We represent a general formulation to describe the arbitrary dependence between fading channels coefficients and corresponding signal-to-noise ratios (SNRs). Then, by exploiting the compact probability density function (PDF) obtained by the FGM copula for Rayleigh fading coefficients, we derived the closed-from expression of ASC and SOP.

\textbullet~Finally, we analyze the impact of non-casually known SI at the transmitter on the performance of ASC and SOP by changing the SI values and the dependence parameter within the defined range. 

The rest of this paper is organized as follows. Section \ref{system-model} describes the system model considered in our work. In section \ref{sec-secrecy capacity}, we characterize the SC of our proposed model. In section \ref{sec-copula}, we briefly review the concept of copula and provide the copula-based multivariate distribution, and then present the main results of secure communication with SI under correlated Rayleigh fading wiretap channel. We derive the exact closed-form expression of ASC and SOP in subsections \ref{subsec-asc} and \ref{subsec-sop}, respectively. In section \ref{sec-results}, the validity of analytical results is illustrated numerically. Finally, the conclusions are drawn in section \ref{sec-conclusion}.
	\section{System Model}\label{system-model}
	\begin{figure}[!t]\vspace{0ex}
		\centering
		\includegraphics[width=.9\columnwidth]{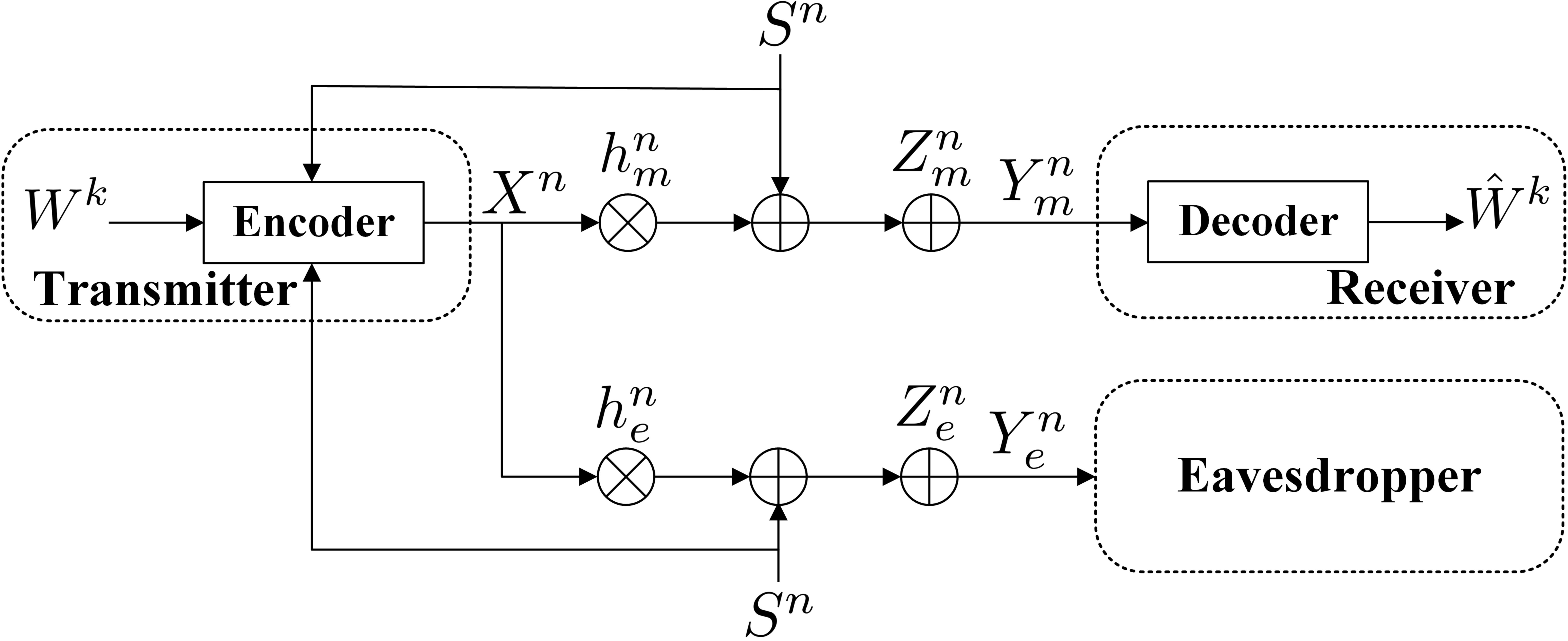} 
		\caption{System model depicting a correlated  fading
			wiretap channel with SI at the transmitter.} 
		\label{model}
	\end{figure}
We consider a wireless wiretap channel with the non-causally known SI $S_i$, $1\leq i\leq n$ at the transmitter\footnote{\textcolor{blue}{We assume that the interfering sequence $S_i$ available as SI at the transmitter is injected from an external dominant
source that exhibits a strong line-of-sight condition with reduced fading fluctuation \cite[Fig. 5]{saad2019vision}. Hence, we use this unfaded counterpart approximation for the fading coefficients corresponding to the interfering signals in our analysis.}} as shown in Fig. \ref{model}. In this scenario, the legitimate transmitter (Alice) sends the confidential message to the legitimate receiver (Bob) through the main channel (transmitter-to-receiver), while the eavesdropper (Eve) attempts to decode the message from its received signal through the eavesdropper channel (transmitter-to-eavesdropper). Let $\mathcal{X}^n$ be the input set, $\mathcal{Y}_m^n$ be the output set of the legitimate receiver, $\mathcal{Y}_e^n$ be the output set of the eavesdropper, and $\mathcal{S}$ be the finite set of SI at the transmitter. Specifically, we assume the transmitter wants to send a message $W^k\in\mathcal{W}^k=\{1,2,...,M\}$ to the legitimate receiver in $n$ uses of the channel. Since the SI is considered to be non-causally known at the transmitter, Alice encodes $W^k$ and $S^n\in\mathcal{S}^n$ into a codeword $X^n\in\mathcal{X}^n$ for transmission over the main channel in a dirty paper fashion. Bob decodes the received signal $Y^n_m\in\mathcal{Y}^n_m$ by making an estimate $\hat{W}^k\left(Y^n_m\right)$ of the message $W^k$, while Eve received the signal $Y^n_e\in\mathcal{Y}^n_e$. Therefore,  the received signals by Bob $Y_m(i)$ and by Eve $Y_e(i)$ can be determined as follows: 
	\begin{align}\label{eq-ym}
		Y_m(i)=h_m(i)X(i)+S(i)+Z_m(i),
	\end{align}
	\begin{align}\label{eq-ye}
		Y_e(i)=h_e(i)X(i)+S(i)+Z_e(i), \quad i=1,...,n,
	\end{align}
where 
$S(i)$ are the non-causally known SI at the transmitter, with variances $Q$ ($S\sim\mathcal{N}\left(0,Q\right)$) which are independently and identically distributed (i.i.d) with probability distribution $p(s)$. The terms $Z_m(i)$ and $Z_e(i)$ correspond to i.i.d. additive white Gaussian noise (AWGN) with zero mean and variances $N_m$ and $N_e$ at the legitimate receiver and eavesdropper, respectively. \textcolor{blue}{Finally, $h_m(i)$ and $h_e(i)$ are the corresponding block fading channel 
 coefficients and hence, the channel power gains are 
  $g_m(i)=|h_m(i)|^2$ and $g_e(i)=|h_e(i)|^2$, where the fading process $h_m(i)$ and $h_e(i)$ are correlated.} We suppose that both the main channel and eavesdropper channel are quasi-static fading channels, meaning that the fading coefficients, albeit random, are constant during the transmission of an entire codeword (i.e., $h_m(i)=h_m$ and $h_e(i)=h_e, \forall i=1,...,n$) and  independent from codeword to
codeword. We also assume that the channel input, the channel fading
coefficients, and the channel noises are all independent. Furthermore, we consider that the codewords sent by transmitter over the channels are subject to the average power constraint as follows:
\begin{align}
\frac{1}{n}\sum_{i=1}^{n}\mathbb{E}\{|X(i)|^2\}\leq P.
\end{align}
Therefore, the instantaneous random  SNR at the legitimate receiver and eavesdropper are given by $\gamma_m=\frac{P|h_m|^2}{N_m}$ and $\gamma_e=\frac{P|h_e|^2}{N_e}$, respectively, while their corresponding  average value are defined as  $\bar{\gamma}_m=\frac{P\mathbb{E}\{|h_m|^2\}}{N_m}$ and $\bar{\gamma}_e=\frac{P\mathbb{E}\{|h_e|^2\}}{N_e}$. 

In the proposed system model, it is assumed that $W^k$ is uniformly distributed on $\{1,2,...,M\}$ as in \citen{chen2008wiretap}. Thus, the transmission rate to the legitimate receiver is defined as $R=H(W^k)/n=\log M/n$, and the equivocation rate of the eavesdropper which indicates the secrecy level of confidential messages against the eavesdropper is denoted as $R_{eq}=H\left(W^k|Y^n_e\right)/n$, where $H\left(W^k|Y^n_e\right)$ is the remaining entropy of $W^k$ conditioned on the  value of $Y^n_e$. Besides, the average error probability is defined as:
\begin{align}
\mathcal{P}_{E}=\frac{1}{M}\sum_{i=1}^{M}\Pr\left(\hat{W}^k\left(Y^n_m\right)\neq|W^k=i\right).
\end{align}

Consequently, the secrecy rate $R_s$ is defined to be achievable, if there exist a code $\left(2^{nR_s},n\right)$ such that for all $\epsilon>0$ and sufficiently large $n$, $\mathcal{P}_{E}\leq\epsilon$ and $R_{eq}\ge R_s-\epsilon$. So, the SC $\mathcal{C}_s$ can be defined as the supremum of achievable transmission rate $R_s$:
\begin{align}
\mathcal{C}_s\overset{\Delta}{=}\underset{\mathcal{P}_E\leq\epsilon}{\text{sup}} R_s.
\end{align}
\section{Secrecy capacity definition}\label{sec-secrecy capacity}
In this section, first, we introduce the SC definition for the discrete memoryless wiretap channel (DMWC) with SI at the transmitter. Then, we exploit this definition to derive the SC of the considered model in a wireless scenario. 
\begin{theorem}\label{thm-inf-cs}
The SC bounds for the DMWC with non-causally known SI at the transmitter was defined in \cite{chen2008wiretap} as follows:
\begin{align}
R_s^{l}\leq\mathcal{C}_s\leq R_s^{u},
\end{align}
where
\begin{align}
R_s^l=\underset{U\rightarrow(X,S)\rightarrow Y_m\rightarrow Y_e}{\max} I(U,Y_m)-\max\{I(U;S),I(U;Y_e)\},
\end{align}
\begin{align}\label{eq10}
R_s^u=\min\{\mathcal{C}_m,R_e\},
\end{align}
\begin{align}
\textcolor{blue}{R_e=\underset{U\rightarrow(X,S)\rightarrow Y_m\rightarrow Y_e}{\max}\left[I(U;Y_m)-I(U;Y_e)\right]},
\end{align}
$U$ is a auxiliary random variable so that $U\rightarrow(X,S)\rightarrow Y_m\rightarrow Y_e$ forms a Markov chain, \textcolor{blue}{$R_e$ is an ancillary secrecy rate, }and \textcolor{blue}{$\mathcal{C}_m$} is the capacity of the main channel with consideration of SI that is defined as:
\begin{align}
\mathcal{C}_m=\underset{U\rightarrow(X,S)\rightarrow Y_m\rightarrow Y_e}{\max} \left[I(U;Y_m)-I(U;S)\right].
\end{align}
\end{theorem}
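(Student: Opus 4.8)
The plan is to establish the two bounds separately, following the dirty-paper wiretap coding paradigm of \cite{chen2008wiretap}: the lower bound $R_s^l\le\mathcal{C}_s$ is an achievability statement, while the upper bound $\mathcal{C}_s\le R_s^u$ is a converse. For achievability I would fix a test distribution $p(u|s)$ together with a deterministic encoding map $x=f(u,s)$ consistent with the Markov chain $U\rightarrow(X,S)\rightarrow Y_m\rightarrow Y_e$, and build a single codebook of $2^{nR_t}$ sequences $u^n$ drawn i.i.d.\ from $p(u)$ with $R_t$ just below $I(U;Y_m)$, so that Bob reliably recovers the transmitted $u^n$ by joint-typicality decoding (packing lemma). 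The codebook is partitioned into $2^{nR_s}$ message bins, each further sub-binned; the sub-bin codewords serve a double duty of (i) matching the non-causally known interference $S^n$ via Gelfand--Pinsker binning, which by the covering lemma requires a sub-bin rate exceeding $I(U;S)$, and (ii) confusing Eve, which requires a confusion rate exceeding $I(U;Y_e)$. Because one layer of randomness can simultaneously discharge both roles, the rate consumed is the larger of the two, $\max\{I(U;S),I(U;Y_e)\}$, leaving $R_s=I(U;Y_m)-\max\{I(U;S),I(U;Y_e)\}$ for the secret message; optimizing over the test channel then yields $R_s^l$.

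The crux of the achievability is the equivocation computation. I would show $\tfrac{1}{n}H(W^k\mid Y_e^n)\ge R_s-\epsilon$ by expanding $H(W^k\mid Y_e^n)$ with the chain rule over the codebook structure, bounding the information Eve gleans about the chosen $u^n$ by $nI(U;Y_e)$ through a list-decoding argument on her channel, and invoking the fact that the sub-bin was sized to absorb exactly this leakage. This is the step I expect to be the main obstacle, since the SI-matching and secrecy-randomization layers are entangled, and one must verify that the single randomization budget $\max\{I(U;S),I(U;Y_e)\}$ is genuinely sufficient for the covering success and the secrecy guarantee to hold jointly with high probability, rather than requiring their sum.

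For the converse I would take any sequence of $(2^{nR_s},n)$ codes with $\mathcal{P}_E\le\epsilon$ and $R_{eq}\ge R_s-\epsilon$, and derive the two upper bounds. Fano's inequality gives $H(W^k\mid Y_m^n)\le n\epsilon_n$, while the secrecy requirement gives $H(W^k\mid Y_e^n)\ge n(R_s-\epsilon)$. Defining a single-letter auxiliary variable $U_i=(W^k,Y_m^{i-1},Y_{e,i+1}^n)$, augmented with the relevant SI symbols to preserve the Markov chain, and applying the Csisz\'ar--K\"orner sum identity, I would telescope $I(W^k;Y_m^n)-I(W^k;Y_e^n)$ into single-letter form to obtain $R_s\le I(U;Y_m)-I(U;Y_e)=R_e$. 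Separately, the reliability constraint alone, ignoring secrecy, bounds the message rate by the Gelfand--Pinsker capacity of the main channel, $R_s\le\mathcal{C}_m=I(U;Y_m)-I(U;S)$; taking the minimum of the two gives $R_s^u=\min\{\mathcal{C}_m,R_e\}$. The delicate point here is choosing the auxiliary variable so that it simultaneously satisfies the Markov constraint and the non-causal-SI structure in both single-letterizations, which is precisely where the argument of \cite{chen2008wiretap} must be followed closely.
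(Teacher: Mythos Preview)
The paper does not contain its own proof of Theorem~\ref{thm-inf-cs}; the theorem is stated as an imported result from \cite{chen2008wiretap}, and the only related argument in the paper (Appendix~A) proves the derived Corollaries~\ref{col-cm} and~\ref{col-cs}, not Theorem~\ref{thm-inf-cs} itself. Your proposal is therefore not comparable to any proof in the paper but is instead a sketch of the original Chen--Vinck argument that the paper is citing. As such a sketch it is reasonable: the Gelfand--Pinsker-style binning with a shared randomization layer of size $\max\{I(U;S),I(U;Y_e)\}$ for achievability, and the Csisz\'ar--K\"orner identification of the auxiliary plus the separate Gelfand--Pinsker bound $\mathcal{C}_m$ for the converse, are exactly the ingredients of \cite{chen2008wiretap}. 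The one point you should be aware of is that the upper bound $R_e$ in \cite{chen2008wiretap} is derived under the degradedness structure $Y_m\rightarrow Y_e$ (which the Markov chain in the statement encodes), and the single-letterization of $I(W^k;Y_m^n)-I(W^k;Y_e^n)$ relies on that ordering; your auxiliary choice $U_i=(W^k,Y_m^{i-1},Y_{e,i+1}^n,\ldots)$ is the right shape, but the Markov-chain verification step uses degradedness in an essential way and is worth flagging explicitly rather than folding into ``followed closely.''
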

 \textcolor{blue}{Theorem \ref{thm-inf-cs} shows the bounds for the SC of DMWC in the presence of non-causally known SI at the transmitter. In contrast to the method of proving the SC of the blank DMWC (i.e., the difference between the main and eavesdropper channels),  in which only two extreme points were proved for the achievable region \cite{leung1978gaussian}, this technique provides a more general case of proving achievable region by defining an auxiliary parameter $U$ \cite{chen2008wiretap}. Thus, in order to determine the exact expression of the SC for the corresponding wiretap channel in some specific scenarios of considering SI, we introduce the following corollaries.}
\begin{corollary}\label{col-cm}
According to Theorem \ref{thm-inf-cs}, if there is an auxiliary parameter $U_m$ so that\\
1) $U_m\rightarrow(X,S)\rightarrow Y_m\rightarrow Y_e$ forms a Markov chain,\\
2) $I(U_m;Y_m)-I(U_m;S)=\mathcal{C}_m$,\\
3) $I(U_m;S)\ge I(U_m;Y_e)$,\\
then the SC $\mathcal{C}_s$ is equal to $\mathcal{C}_m$.
\end{corollary}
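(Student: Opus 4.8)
The plan is to prove $\mathcal{C}_s = \mathcal{C}_m$ by a sandwich argument: I would bound $\mathcal{C}_s$ from above and from below by $\mathcal{C}_m$, using the lower and upper bounds already furnished by Theorem~\ref{thm-inf-cs}.

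First I would dispatch the upper bound, which is immediate and needs no property of $U_m$. By Theorem~\ref{thm-inf-cs} we have $\mathcal{C}_s \leq R_s^u = \min\{\mathcal{C}_m, R_e\}$, and since the minimum of two quantities never exceeds either argument, this gives $\mathcal{C}_s \leq \mathcal{C}_m$ at once.

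Next I would establish the matching lower bound $\mathcal{C}_s \geq \mathcal{C}_m$, which is where the three hypotheses enter. Since $R_s^l$ is a maximum over all auxiliary variables $U$ satisfying the Markov chain $U\rightarrow(X,S)\rightarrow Y_m\rightarrow Y_e$, condition~(1) certifies that the particular choice $U=U_m$ is admissible, so evaluating the objective at $U_m$ yields a valid lower bound:
\begin{align}
R_s^l \geq I(U_m;Y_m) - \max\{I(U_m;S),\, I(U_m;Y_e)\}.
\end{align}
The decisive step is then to collapse the inner maximum: condition~(3), $I(U_m;S) \geq I(U_m;Y_e)$, forces $\max\{I(U_m;S), I(U_m;Y_e)\} = I(U_m;S)$, whence $R_s^l \geq I(U_m;Y_m) - I(U_m;S)$. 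Invoking condition~(2), which identifies $I(U_m;Y_m) - I(U_m;S) = \mathcal{C}_m$, yields $R_s^l \geq \mathcal{C}_m$, and combining with $R_s^l \leq \mathcal{C}_s$ from Theorem~\ref{thm-inf-cs} gives $\mathcal{C}_s \geq \mathcal{C}_m$.

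Finally I would chain the two inequalities, $\mathcal{C}_m \leq R_s^l \leq \mathcal{C}_s \leq R_s^u \leq \mathcal{C}_m$, to conclude $\mathcal{C}_s = \mathcal{C}_m$. There is no genuine obstacle here; the only points requiring care are recognizing that condition~(1) is exactly what legitimizes substituting $U_m$ into the maximization defining $R_s^l$, and that condition~(3) is precisely the hypothesis needed to remove the $\max\{\cdot,\cdot\}$ in the lower-bound expression. In essence, the corollary asserts that conditions~(2)--(3) align the achievable lower bound with the converse upper bound at the common value $\mathcal{C}_m$, so that the secrecy capacity is pinned down exactly.
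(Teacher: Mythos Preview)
Your sandwich argument is correct and is the natural proof: condition~(1) legitimizes plugging $U_m$ into the maximization defining $R_s^l$, condition~(3) collapses the $\max$, condition~(2) identifies the result with $\mathcal{C}_m$, and the upper bound $\mathcal{C}_s\leq R_s^u\leq\mathcal{C}_m$ follows immediately from Theorem~\ref{thm-inf-cs}.

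The paper's Appendix~\ref{app-coll} treats Corollaries~\ref{col-cm} and~\ref{col-cs} together but writes out only the argument for Corollary~\ref{col-cs}, leaving Corollary~\ref{col-cm} to be understood by symmetry. Even for Corollary~\ref{col-cs} the paper takes a slightly different route on the converse side: rather than simply reading off $\mathcal{C}_s\leq R_s^u\leq R_e$ from Theorem~\ref{thm-inf-cs} as you do, it re-derives $\mathcal{C}_s\leq R_e$ from scratch via Fano's inequality and a single-letterization argument. Your approach is more economical---both halves of the sandwich are already contained in Theorem~\ref{thm-inf-cs}, so there is no need to reprove the converse---while the paper's longer argument has the (minor) merit of making the converse self-contained within the appendix.
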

\begin{corollary}\label{col-cs}
	According to Theorem \ref{thm-inf-cs}, if there is an auxiliary parameter $U_e$ so that\\
	1) $U_e\rightarrow(X,S)\rightarrow Y_m\rightarrow Y_e$ forms a Markov chain,\\
	2) $I(U_e;Y_m)-I(U_e;Y_e)=R_e$,\\
	3) $I(U_e;Y_e)\ge I(U_e;S)$,\\
	then the SC $\mathcal{C}_s$ is equal to $R_e$.
\end{corollary}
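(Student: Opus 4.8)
The plan is to establish $\mathcal{C}_s = R_e$ by a sandwich argument that pins $\mathcal{C}_s$ between the lower and upper bounds supplied by Theorem~\ref{thm-inf-cs}. The upper direction comes essentially for free: since $R_s^u = \min\{\mathcal{C}_m, R_e\}$, we trivially have $\mathcal{C}_s \leq R_s^u \leq R_e$. So the only real work is to establish the matching lower bound $\mathcal{C}_s \geq R_e$.

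For the lower bound, I would exploit that $R_s^l$ is itself a maximization over all admissible auxiliary variables $U$, and that the given $U_e$ is one such admissible choice. Concretely, condition~1 guarantees that $U_e \to (X,S) \to Y_m \to Y_e$ is a Markov chain, so $U_e$ is a feasible competitor in the optimization defining $R_s^l$. Evaluating that optimization at the single feasible point $U = U_e$ yields $R_s^l \geq I(U_e; Y_m) - \max\{I(U_e; S), I(U_e; Y_e)\}$. This ``lower bound the max by a single feasible value'' step is the crux, and it mirrors exactly the mechanism used in Corollary~\ref{col-cm}.

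Next I would simplify the inner maximum using condition~3. Because $I(U_e; Y_e) \geq I(U_e; S)$, we have $\max\{I(U_e; S), I(U_e; Y_e)\} = I(U_e; Y_e)$, so the previous inequality becomes $R_s^l \geq I(U_e; Y_m) - I(U_e; Y_e)$. Condition~2 then identifies the right-hand side precisely as $R_e$, giving $R_s^l \geq R_e$. Combining this with the bound $\mathcal{C}_s \geq R_s^l$ from Theorem~\ref{thm-inf-cs} closes the lower direction.

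Putting the two directions together produces the chain $R_e \leq R_s^l \leq \mathcal{C}_s \leq R_s^u \leq R_e$, which forces every inequality to be an equality and in particular yields $\mathcal{C}_s = R_e$. I do not expect a genuine obstacle here; the argument is a short deduction from the bounds of Theorem~\ref{thm-inf-cs}. The only point requiring care is the feasibility check in the plug-in step: one must invoke condition~1 to confirm that $U_e$ truly belongs to the Markov-chain feasible set over which $R_s^l$ (and $R_e$) are defined, since otherwise the inequality $R_s^l \geq I(U_e; Y_m) - \max\{I(U_e; S), I(U_e; Y_e)\}$ would not be legitimate.
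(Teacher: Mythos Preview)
Your sandwich argument is correct and is in fact the cleaner way to deduce the corollary from Theorem~\ref{thm-inf-cs}: you invoke $\mathcal{C}_s \leq R_s^u \leq R_e$ directly from the theorem, and for the lower side you plug the feasible $U_e$ into $R_s^l$, use condition~3 to collapse the inner $\max$, and identify the result as $R_e$ via condition~2.

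The paper's proof handles the lower bound in essentially the same way (first observing $R_s^l \leq R_e$ in general, then using conditions~1--3 to show equality at $U_e$, hence $\mathcal{C}_s \geq R_s^l = R_e$), but it treats the upper bound differently. Rather than citing the already-stated bound $\mathcal{C}_s \leq R_s^u$, the paper re-derives a converse $\mathcal{C}_s \leq R_e$ from first principles: it invokes Fano's inequality and the data-processing inequality, expands $H(W^k \mid Y_e^n)$ via the chain rule, identifies an auxiliary $U_i = (W^k, Y_{m_1}^{i-1}, Y_{e_1}^{i-1}, Y_{e_{i+1}}^n, S_1^{i-1}, S_{i+1}^n)$, and uses the Markov structure to single-letterize to $\sum_i [I(U_i;Y_{m_i}) - I(U_i;Y_{e_i})]$. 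Your route is shorter and fully justified by the corollary's own premise (``According to Theorem~\ref{thm-inf-cs}''); the paper's route is longer but self-contained, effectively re-establishing the $R_e$ part of the upper bound in Theorem~\ref{thm-inf-cs} rather than taking it as given.
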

\begin{proof}
\textcolor{blue}{The details of the proof are in Appendix \ref{app-coll}.}
\end{proof}
According to \eqref{eq10} and the conditions in Corollaries \ref{col-cm} and \ref{col-cs}, the SC for a DMWC could be $\mathcal{C}_m$ or $R_e$. Now, by extending Theorem \ref{thm-inf-cs} to the Gaussian wiretap channel and assuming the auxiliary random variable $U=X+\alpha S$, where $\alpha$ is a real number and $X$ is independent of $S$, the SC can be determined. \textcolor{blue}{Since the information rate from the transmitter to the receiver must be larger than $0$ for practical reasons, the third condition in both Corollaries \ref{col-cm} and \ref{col-cs} should be considered.} To this end, the following lemma is considered to specify the range of parameter $\alpha$ \textcolor{blue}{and find the optimal value of $\alpha$ for a transmission with maximum possible rate.}
\begin{lemma}\label{lemma1}
The conditions for Corollaries \ref{col-cm} and \ref{col-cs} are explicitly realized based on the system model parameters as:
\begin{align}
&I(U;S)\ge I(U;Y_e)\Longleftrightarrow \alpha\ge\alpha_0\;\;\text{or}\;\;\alpha\le\alpha_{-0},\\
&I(U;S)< I(U;Y_e)\Longleftrightarrow \alpha_{-0}<\alpha<\alpha_0,
\end{align}
where 
\begin{align}
&\alpha_0=\frac{P}{P+N_e}\left(1+\sqrt\frac{P+Q+N_e}{N_e}\right),\\
&\alpha_{-0}=\frac{P}{P+N_e}\left(1-\sqrt\frac{P+Q+N_e}{N_e}\right).
\end{align}
\begin{proof}
The details of the proof are in Appendix \ref{app-lemma1}. 
\end{proof}
\end{lemma}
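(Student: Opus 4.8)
The plan is to exploit the fact that, for this Gaussian instance, the auxiliary variable $U=X+\alpha S$ is a linear combination of the mutually independent Gaussians $X\sim\mathcal N(0,P)$, $S\sim\mathcal N(0,Q)$ and $Z_e\sim\mathcal N(0,N_e)$, so that the triple $(U,S,Y_e)$ is jointly Gaussian and each mutual information appearing in the conditions of Corollary~\ref{col-cm} and Corollary~\ref{col-cs} becomes an explicit function of a few variances. First I would evaluate $I(U;S)=h(U)-h(U\mid S)$; since $X$ and $S$ are independent, $h(U\mid S)=h(X+\alpha S\mid S)=h(X)$, which collapses this to $I(U;S)=\tfrac12\log\!\big((P+\alpha^2Q)/P\big)$.

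Next I would evaluate $I(U;Y_e)$ from the jointly-Gaussian identity $I(U;Y_e)=-\tfrac12\log(1-\rho^2)$ with $\rho^2=\mathrm{Cov}(U,Y_e)^2/\big(\mathrm{Var}(U)\,\mathrm{Var}(Y_e)\big)$. Reading the second-order statistics off \eqref{eq-ye} with the instantaneous gain $h_e$ normalized gives $\mathrm{Var}(U)=P+\alpha^2Q$, $\mathrm{Var}(Y_e)=P+Q+N_e$ and $\mathrm{Cov}(U,Y_e)=\mathrm{Cov}(X+\alpha S,\,X+S+Z_e)=P+\alpha Q$, so that $I(U;Y_e)=\tfrac12\log\frac{(P+\alpha^2Q)(P+Q+N_e)}{(P+\alpha^2Q)(P+Q+N_e)-(P+\alpha Q)^2}$.

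With both expressions available, I would impose $I(U;S)\ge I(U;Y_e)$. Because the logarithm is monotone and the common factor $(P+\alpha^2Q)$ cancels, the inequality reduces, after clearing the strictly positive denominators, to a quadratic inequality $a\alpha^2+b\alpha+c\ge0$ whose leading coefficient $a=Q(P+N_e)$ is positive. Hence the parabola opens upward, and $I(U;S)\ge I(U;Y_e)$ holds exactly outside the two roots (i.e.\ $\alpha\le\alpha_{-0}$ or $\alpha\ge\alpha_0$) while the reverse strict inequality $I(U;S)<I(U;Y_e)$ holds strictly between them, which is the claimed equivalence. Solving the quadratic then identifies its two roots with $\alpha_{-0}$ and $\alpha_0$; a quick sanity check is that their sum equals $2P/(P+N_e)$ and their product is negative, confirming $\alpha_{-0}<0<\alpha_0$ and that the admissible set is an unbounded pair of rays, as stated.

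I expect the genuine work to be purely algebraic rather than conceptual, and the two delicate steps are: (i) expanding and simplifying the determinant $(P+\alpha^2Q)(P+Q+N_e)-(P+\alpha Q)^2$ without sign slips, since this is precisely what produces the $\alpha^2$, $\alpha$ and constant terms of the quadratic; and (ii) justifying the cross-multiplication by verifying that this determinant is strictly positive, which follows because it equals $\mathrm{Var}(U)\,\mathrm{Var}(Y_e)\,(1-\rho^2)>0$ whenever $U$ and $Y_e$ are not perfectly correlated. Once the quadratic is isolated and its positive leading coefficient is noted, reading off the three regimes and matching the roots to $\alpha_0,\alpha_{-0}$ is routine.
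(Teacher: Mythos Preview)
Your proposal is correct and follows essentially the same route as the paper: both compute $I(U;S)$ and $I(U;Y_e)$ explicitly for the Gaussian choice $U=X+\alpha S$ (with the eavesdropper gain normalized), reduce the comparison to the quadratic $Q(P+N_e)\alpha^{2}-2PQ\alpha-P^{2}\gtrless 0$, and read off the two roots. The only cosmetic difference is that you obtain $I(U;Y_e)$ via the bivariate-Gaussian correlation identity $-\tfrac12\log(1-\rho^2)$, whereas the paper writes it as $H(Y_e)+H(U)-H(U,Y_e)$ and expands the $2\times2$ covariance determinant; since $(P+\alpha^{2}Q)(P+Q+N_e)-(P+\alpha Q)^{2}=PQ(1-\alpha)^{2}+N_e(P+\alpha^{2}Q)$, the two computations coincide.
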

\textcolor{blue}{From an information theoretic perspective, the result in Theorem \ref{thm-inf-cs} can be also extended to the continuous alphabets and average input constraints by using the standard argument as in \cite{gallager1968information}, and also can be extended to the wireless channels by considering the propagation environment effects such as fading, shadowing, path-loss, etc \cite{el2011network}.} Hence, by considering Lemma \ref{lemma1} and Corollaries \ref{col-cm} and \ref{col-cs}, SC for the block fading wiretap channel with non-causally known SI at the transmitter is defined as the following theorem.
\begin{theorem}\label{thm-sc}
	The SC $\mathcal{C}_s$ for the block fading wiretap channel with non-causally known SI at the transmitter is determined as:
	\begin{equation}\label{Css}
		\mathcal{C}_s=
		\begin{cases}
			\log_2\left(1+\gamma_m\right), \quad\quad\quad\;\;\;\;\;\;\;\;\ \text{if}\;\; \text{Corollary 1} \\
			\log_2\left(\frac{1+\bar{\gamma}_{ms}+\gamma_m}{1+\bar{\gamma}_{es}+\gamma_e}\right), \quad\quad\quad\;\;\; \text{if}\;\;\text{Corollary 2}
		\end{cases},
	\end{equation}
where $\bar{\gamma}_{ms}=\frac{Q}{N_m}$ and $\bar{\gamma}_{es}=\frac{Q}{N_e}$.
\end{theorem}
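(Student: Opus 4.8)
The plan is to specialize the general bounds of Theorem~\ref{thm-inf-cs} together with the two equality conditions of Corollaries~\ref{col-cm} and~\ref{col-cs} to the Gaussian block-fading channel, and to resolve the two resulting scalar optimizations over $\alpha$. Following the standard extension of discrete-alphabet wiretap results to continuous alphabets under an average power constraint, I would fix a fading block, treat $h_m$ and $h_e$ as constants, and view \eqref{eq-ym}--\eqref{eq-ye} as a Gaussian wiretap channel with non-causal SI. With the auxiliary variable $U=X+\alpha S$, where $X\sim\mathcal{N}(0,P)$ is independent of $S\sim\mathcal{N}(0,Q)$, the pairs $(U,Y_m)$, $(U,Y_e)$ and $(U,S)$ are jointly Gaussian, so I would first record the closed forms of $I(U;Y_m)$, $I(U;Y_e)$ and $I(U;S)$ from the second-order statistics, e.g. $\mathrm{Var}(U)=P+\alpha^2 Q$, $\mathrm{Cov}(U,Y_m)=h_m P+\alpha Q$, and $\mathrm{Var}(Y_m)=|h_m|^2 P+Q+N_m$. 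Lemma~\ref{lemma1} then fixes, as a function of $\alpha$, which condition $I(U;S)\gtrless I(U;Y_e)$ is active and hence which line of \eqref{Css} applies.

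When Corollary~\ref{col-cm} is active, $\mathcal{C}_s=\mathcal{C}_m=\max_\alpha[I(U;Y_m)-I(U;S)]$. Substituting the Gaussian expressions, the $I(U;S)$ term cancels the factor $\mathrm{Var}(U)$ and leaves $\log_2\big(P\,\mathrm{Var}(Y_m)/[\mathrm{Var}(U)\mathrm{Var}(Y_m)-\mathrm{Cov}(U,Y_m)^2]\big)$, so maximizing over $\alpha$ reduces to minimizing the quadratic $\mathrm{Var}(U)\mathrm{Var}(Y_m)-\mathrm{Cov}(U,Y_m)^2$. Its minimizer is the dirty-paper coefficient $\alpha^\star=h_m P/(|h_m|^2 P+N_m)$, at which the known interference $S$ no longer reduces the achievable rate; plugging $\alpha^\star$ back in, I expect the expression to collapse to $\mathcal{C}_m=\log_2(1+\gamma_m)$, the first line of \eqref{Css} and the fading counterpart of Costa's result.

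When Corollary~\ref{col-cs} is active, $\mathcal{C}_s=R_e=\max_\alpha[I(U;Y_m)-I(U;Y_e)]$. After the common $\mathrm{Var}(U)$ factors cancel, this difference takes the form $\log_2\!\frac{\mathrm{Var}(Y_m)\,\mathrm{Var}(Y_e\mid U)}{\mathrm{Var}(Y_e)\,\mathrm{Var}(Y_m\mid U)}$, so the optimization reduces to maximizing $\mathrm{Var}(Y_e\mid U)/\mathrm{Var}(Y_m\mid U)$. The optimizing $\alpha$ drives $U$ toward the noiseless part of the received signals, so that $\mathrm{Var}(Y_m\mid U)\to N_m$ and $\mathrm{Var}(Y_e\mid U)\to N_e$ (in the unit-gain reduction this is $\alpha=1$, for which $Y_m=U+Z_m$ and $Y_e=U+Z_e$ simultaneously). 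Substituting gives $R_e=\log_2\frac{\mathrm{Var}(Y_m)/N_m}{\mathrm{Var}(Y_e)/N_e}$; replacing $P/N_m$ and $P/N_e$ by the per-block SNRs $\gamma_m$ and $\gamma_e$ and writing $\bar{\gamma}_{ms}=Q/N_m$, $\bar{\gamma}_{es}=Q/N_e$ yields the second line of \eqref{Css}.

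The main obstacle I anticipate is the $R_e$ optimization: unlike the dirty-paper case it requires differentiating a ratio of two $\alpha$-quadratics and then verifying that the maximizer lies inside the interval $(\alpha_{-0},\alpha_0)$ supplied by Lemma~\ref{lemma1}, so that the condition $I(U;Y_e)\ge I(U;S)$ of Corollary~\ref{col-cs} holds at the optimum and not merely on its boundary. The dual consistency check for the first branch---that $\alpha^\star$ falls in the complementary range where $I(U;S)\ge I(U;Y_e)$, activating Corollary~\ref{col-cm}---is equally delicate. Once these range memberships are confirmed, the remaining reductions to $\log_2(1+\gamma_m)$ and to the ratio form are routine algebra.
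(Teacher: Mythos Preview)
Your plan matches the paper's proof almost step for step: fix a fading block, take the Gaussian dirty-paper auxiliary $U=X+\alpha S$, compute $I(U;Y_m)$, $I(U;Y_e)$, $I(U;S)$ from the joint Gaussian covariances, and then carry out the two scalar optimizations over $\alpha$ dictated by Corollaries~\ref{col-cm} and~\ref{col-cs}. Appendix~\ref{app-thm-sc} does precisely this, and it likewise only asserts the outcome of the $R_e$ maximization without displaying the calculus, so the ``main obstacle'' you flag is treated no more explicitly there than in your sketch; the paper also does not verify the range-membership consistency you correctly identify as delicate.

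One caveat on your heuristic for the $R_e$ branch: with distinct fading gains $h_m\neq h_e$ there is no single $\alpha$ that simultaneously yields $\mathrm{Var}(Y_m\mid U)=N_m$ and $\mathrm{Var}(Y_e\mid U)=N_e$, since those equalities force $\alpha=1/h_m$ and $\alpha=1/h_e$ respectively. So the unit-gain picture $Y_m=U+Z_m$, $Y_e=U+Z_e$ does not transfer directly, and you will need to locate the maximizer of the ratio of the two $\alpha$-quadratics by differentiation rather than by that shortcut.
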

\begin{proof}
The details of the proof are in Appendix \ref{app-thm-sc}. 
\end{proof}
\section{Copula-based Multivariate Distribution}\label{sec-copula}
Since the fading channel coefficients are assumed correlated in the considered model, we need to generate the corresponding multivariate distribution of the random SNRs. Therefore, in this section, we first briefly review the concept of copula theory \cite{nelsen2007introduction}, and then we determine the joint PDF of the main channel and eavesdropper SNRs. As mentioned in the literature, copula theory is a flexible approach to generate the multivariate distribution of correlated random variables (RVs) by only using their corresponding marginal distributions. Besides, copulas are defined by a particular dependence parameter which indicates the intensity of dependency for unknown RVs.  
\begin{definition}[Two-dimensional copula]
	Let $\vec{V}=(V_1,V_2)$ be a vector of two RVs with marginal cumulative distribution functions (CDFs) $F(v_b)=\Pr(V_b\leq v_b)$ for $b=1,2$, respectively, and relevant bivariate CDF $F(v_1,v_2)=\Pr(V_1\leq v_1,V_2\leq v_2)$. Then, the copula function $C(u_1,u_2)$ of $\vec{V}=(V_1,V_2)$ defined on the unit hypercube $[0,1]^2$ with uniformly distributed RVs $U_b:=F(v_b)$ for $b=1,2$ over $[0,1]$ is given by
	\begin{align}
		C(u_1,u_2)=\Pr(U_1\leq u_1,U_2\leq u_2).
	\end{align}
\end{definition}
\begin{theorem}[Sklar's theorem]
	Let $F(v_1,v_2)$ be a joint CDF of RVs with margins $F(v_b)$ for $b=1,2$. Then, there exists one copula function $C$ such that for all $v_b$ in the extended real line domain $\bar{R}$,%
	\vspace{-1ex}
	\begin{align}\label{eq-sklar}
		F(v_1,v_2)=C\left(F(v_1),F(v_2))\right).
	\end{align}
\end{theorem}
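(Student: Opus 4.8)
The plan is to prove existence by explicitly constructing the copula $C$ from the joint CDF $F(v_1,v_2)$ together with its marginals, and then verifying that the constructed object satisfies the defining properties of a copula (groundedness, uniform margins, and the $2$-increasing condition). The natural construction uses the \emph{quasi-inverse} (generalized inverse) of each marginal, defined for $u\in[0,1]$ by $F^{(-1)}(u)=\inf\{v\in\bar{R}:F(v)\geq u\}$, which reduces to the ordinary inverse whenever the marginal is strictly increasing and continuous.

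First I would treat the case in which both marginals $F(v_1)$ and $F(v_2)$ are continuous. Here I set $C(u_1,u_2)=F\!\left(F_1^{(-1)}(u_1),F_2^{(-1)}(u_2)\right)$. Using the identity $F\!\left(F^{(-1)}(u)\right)=u$, valid for continuous marginals, the composition $C\left(F(v_1),F(v_2)\right)=F(v_1,v_2)$ follows directly, which is exactly \eqref{eq-sklar}. It then remains to check that $C$ is a genuine copula: groundedness and the correct uniform margins follow from the limiting behaviour of $F$ at $\pm\infty$, while the $2$-increasing property is inherited from the fact that $F$ is a bona fide bivariate CDF, so that every rectangle is assigned nonnegative probability mass.

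Next I would handle the general case, in which the marginals may have jump discontinuities. The construction above specifies $C$ unambiguously only on the set $\mathrm{Ran}(F_1)\times\mathrm{Ran}(F_2)$, i.e.\ the product of the ranges of the two marginals; the jumps leave ``gaps'' in $[0,1]^2$ on which $C$ is not yet defined. The hard part is to extend $C$ from this set to all of $[0,1]^2$ while preserving the copula properties. The key is a Lipschitz estimate: on its domain the partially defined function satisfies $\lvert C(a_1,a_2)-C(b_1,b_2)\rvert\leq\lvert a_1-b_1\rvert+\lvert a_2-b_2\rvert$, a consequence of the $2$-increasing property combined with uniform margins. Since the domain is dense in its closure and the function is uniformly (indeed Lipschitz) continuous there, it admits a unique continuous extension to $[0,1]^2$; one then verifies that this extension—equivalently obtained by bilinear interpolation across the gaps—remains grounded, $2$-increasing, and uniform in each margin, hence is a copula, and that evaluating it at $\left(F(v_1),F(v_2)\right)$ reproduces $F(v_1,v_2)$ even at the jump points, establishing \eqref{eq-sklar}.

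I expect the extension step in the discontinuous case to be the main obstacle, as it is the only place where the argument is not purely formal. A probabilistic alternative is the \emph{distributional transform}: randomize each marginal at its jumps using an independent uniform variable to obtain genuinely uniform $U_b$, take $C$ to be their joint CDF, and check that de-randomizing recovers \eqref{eq-sklar}; this bypasses the interpolation argument at the cost of auxiliary randomization. Finally, I would note that $C$ is uniquely determined on $\mathrm{Ran}(F_1)\times\mathrm{Ran}(F_2)$ in all cases, and uniquely determined on all of $[0,1]^2$ precisely when both marginals are continuous.
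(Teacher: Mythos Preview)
Your proof proposal is mathematically correct and follows the standard route to Sklar's theorem: construct $C$ on $\mathrm{Ran}(F_1)\times\mathrm{Ran}(F_2)$ via generalized inverses, verify the copula axioms, and in the non-continuous case extend by the Lipschitz bound (or, alternatively, via the distributional transform). This is essentially the argument given in Nelsen's monograph, which the paper cites.

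However, there is nothing to compare against: the paper does \emph{not} prove this statement. Sklar's theorem is quoted in Section~\ref{sec-copula} purely as background---a classical result imported from \cite{nelsen2007introduction} to set up the copula machinery---and no proof or proof sketch is offered. The paper's original contributions (Theorems~\ref{thm-sc}, \ref{thm-asc}, \ref{thm-sop} and Lemma~\ref{lemma1}) are proved in the appendices, but Sklar's theorem and the Fr{\'e}chet--Hoeffding bounds are simply stated. So your proposal is not an alternative to the paper's proof; it supplies a proof where the paper deliberately provides none.
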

\begin{corollary}\label{col-pdf}
	By applying the chain rule to \eqref{eq-sklar}, the joint PDF $f(v_1,v_2)$ is derived as: 	
	\begin{align}\label{pdf-copula}
		f(v_1,v_2)=f(v_1)f(v_2)c\left(F(v_1),F(v_2)\right),
	\end{align}
	where $c\left(F(v_1),F(v_2)\right)=\frac{\partial^2 C(F(v_1),F(v_2))}{\partial F(v_1)\partial F(v_2)}$ is the copula density function and $f(v_b)$ for $b=1,2$ are the marginal PDFs, respectively.
\end{corollary}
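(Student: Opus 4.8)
The plan is to obtain the joint density by differentiating the joint CDF furnished by Sklar's theorem twice, once with respect to each variable, and to track the chain-rule factors carefully. The result is a standard consequence of the absolute continuity of the margins together with smoothness of the copula, so the work is essentially a bookkeeping exercise in the chain rule.

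First I would recall that, by definition, the joint PDF is $f(v_1,v_2)=\frac{\partial^2 F(v_1,v_2)}{\partial v_1\,\partial v_2}$, and then substitute the Sklar representation $F(v_1,v_2)=C(F(v_1),F(v_2))$ from \eqref{eq-sklar}. Writing $u_b:=F(v_b)$ for brevity, I would differentiate once with respect to $v_1$: by the chain rule this gives $\frac{\partial}{\partial v_1}C(u_1,u_2)=\frac{\partial C}{\partial u_1}\,f(v_1)$, since $\frac{d u_1}{d v_1}=\frac{dF(v_1)}{dv_1}=f(v_1)$ is the marginal PDF of $V_1$ and $u_2$ does not depend on $v_1$.

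Next I would differentiate the resulting expression with respect to $v_2$. Because the factor $f(v_1)$ carries no dependence on $v_2$, it is passed through as a constant, and a second application of the chain rule to $\frac{\partial C}{\partial u_1}$ produces $\frac{\partial^2 C}{\partial u_1\,\partial u_2}\,f(v_2)$. Collecting the factors yields $f(v_1,v_2)=f(v_1)f(v_2)\,\frac{\partial^2 C(u_1,u_2)}{\partial u_1\,\partial u_2}$, and identifying the last factor with the copula density $c(F(v_1),F(v_2))$ evaluated at $u_b=F(v_b)$ gives exactly \eqref{pdf-copula}, completing the argument.

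The only step that warrants care is the justification of the underlying regularity: the calculation presupposes that the marginal CDFs $F(v_b)$ are absolutely continuous, so that the marginal densities $f(v_b)$ exist, and that the copula $C$ admits a continuous mixed second partial derivative on $[0,1]^2$, so that the differentiation is legitimate and the order of the two partial derivatives is immaterial. Under these standard assumptions there is no genuine obstacle, and the corollary follows directly from the double application of the chain rule.
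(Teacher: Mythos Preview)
Your proposal is correct and follows exactly the approach the paper indicates: the paper does not spell out a proof but simply states that the result follows ``by applying the chain rule'' to Sklar's representation \eqref{eq-sklar}, and your two-step differentiation with the substitutions $u_b=F(v_b)$ is precisely that chain-rule computation carried out in full. Your added remark on the regularity assumptions (absolute continuity of the margins and existence of the mixed partial of $C$) is a welcome clarification that the paper leaves implicit.
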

\begin{theorem}[Fr{\'e}chet-Hoeffding bounds] For any copula function $C:[0,1]^2\rightarrow[0,1]$ and any $(u_1,u_2)\in[0,1]^2$, the following bounds hold: $C^-\prec C\prec C^+$; where $C_1\prec C_2$ if $C_1(u_1,u_2)\leq C_2(u_1,u_2) \forall (u_1,u_2)\in[0,1]^2$, and 
	\begin{align}
&C^-(u_1,u_2)=\max\left(u_1+u_2-1,0\right),\\
&C^+(u_1,u_2)=\min\left(u_1,u_2\right).
	\end{align}
\end{theorem}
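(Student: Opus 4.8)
The plan is to work directly from the probabilistic interpretation of the copula given in Definition~1, namely $C(u_1,u_2)=\Pr(U_1\leq u_1,U_2\leq u_2)$ with each $U_b$ uniformly distributed on $[0,1]$, so that $\Pr(U_b\leq u_b)=u_b$. Both inequalities then follow from elementary properties of probability measures, and crucially neither one will require any assumption on the underlying dependence structure, which is precisely what makes the bounds universal over all copulas.

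For the upper bound $C\prec C^+$, I would first observe that the event $\{U_1\leq u_1,U_2\leq u_2\}$ is contained in each of the events $\{U_1\leq u_1\}$ and $\{U_2\leq u_2\}$. Monotonicity of probability then yields $C(u_1,u_2)\leq\Pr(U_1\leq u_1)=u_1$ and $C(u_1,u_2)\leq\Pr(U_2\leq u_2)=u_2$ simultaneously, whence $C(u_1,u_2)\leq\min(u_1,u_2)=C^+(u_1,u_2)$.

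For the lower bound $C^-\prec C$, I would combine two observations. First, $C(u_1,u_2)\geq 0$ simply because it is a probability. Second, applying inclusion--exclusion to the union $\{U_1\leq u_1\}\cup\{U_2\leq u_2\}$ gives $\Pr(U_1\leq u_1)+\Pr(U_2\leq u_2)-C(u_1,u_2)=\Pr(\{U_1\leq u_1\}\cup\{U_2\leq u_2\})\leq 1$; rearranging and substituting the uniform marginals produces $C(u_1,u_2)\geq u_1+u_2-1$. Taking the maximum of the two lower estimates delivers $C(u_1,u_2)\geq\max(u_1+u_2-1,0)=C^-(u_1,u_2)$, completing the chain $C^-\prec C\prec C^+$.

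I do not anticipate a genuine obstacle here, as the argument is mechanical once the probabilistic representation is invoked. The only subtlety worth flagging is that the bounds must hold pointwise for every $(u_1,u_2)\in[0,1]^2$ and for an \emph{arbitrary} copula; the proof therefore avoids any appeal to a specific joint law, relying solely on monotonicity and the Bonferroni-type inequality $\Pr(\cdot)\leq 1$, which is exactly what secures the claimed universality.
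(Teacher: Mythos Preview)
Your argument is correct and is the standard proof of the Fr\'echet--Hoeffding bounds: monotonicity of probability for the upper bound, and the Bonferroni inequality together with nonnegativity for the lower bound. There is nothing to compare against, however, because the paper does not prove this theorem; it simply states it as a classical result from copula theory (cf.\ Nelsen's monograph, cited in the paper), alongside Sklar's theorem, and uses it only qualitatively to interpret extreme dependence structures.
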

The upper and lower bounds model extreme dependence structures. If $C=C^-$ the pair of RVs are said to be countermonotonic, whereas $C=C^+$ means that both RVs are comonotonic. The copula function for the independent case $C^\perp (u_1,u_2)=u_1u_2$ defines the limit between positive and negative dependence. Let us assume two Copulas that verify: $C^-\prec C_1\prec C^\perp \prec C_2 \prec C^+$. Then, $C_1$ models a negative dependence and $C_2$ a positive dependence.

\textcolor{blue}{Note that the above definitions are valid for any arbitrary choice of fading distributions as well as copula functions. We will now indicate how the secrecy performance metrics can be characterized in the closed-form expression under correlated Rayleigh fading channels, exploiting the FGM copula. The main advantage of FGM copula is that it 
can describe both positive and negative dependencies between RVs, 
	while offering good mathematical tractability.}

\begin{definition}\label{def}[FGM copula] The FGM copula with dependence parameter $\theta\in[-1,1]$ is defined as:
	\begin{align}\label{eq-fgm}
		C_F(u_1,u_2)=u_1u_2\left(1+\theta(1-u_1)(1-u_2)\right),
	\end{align}
	where $\theta\in[-1,0)$ and $\theta\in(0,1]$ denote the negative and positive dependence structures respectively, while $\theta=0$ indicates the independence structure.
	\end{definition}
		\textcolor{blue}{Since we assume the channel fading coefficients follow Rayleigh distribution, the corresponding random SNRs $\gamma_m>0$ and $\gamma_e>0$ are exponentially distributed with following marginal distributions:  $f(\gamma_j)=\frac{1}{\bar{\gamma}_j}\mathrm{e}^{-\frac{\gamma_j}{\bar{\gamma}_j}},F(\gamma_j)=1-\mathrm{e}^{-\frac{\gamma_j}{\bar{\gamma}_j}}$ for $j\in\{m,e\}$, respectively. }
\begin{lemma}\label{lemma-pdf}
	The joint PDF of $\gamma_m$ and $\gamma_e$ based on FGM copula is determined as:
	\begin{align}
	f(\gamma_m,\gamma_e)=\frac{e^{-\frac{\gamma_m}{\bar{\gamma}_m}-\frac{\gamma_e}{\bar{\gamma}_e}}}{\bar{\gamma}_m\bar{\gamma}_e}\Big[1+\theta\big(1-2e^{-\frac{\gamma_m}{\bar{\gamma}_m}}\big)\big(1-2e^{-\frac{\gamma_e}{\bar{\gamma}_e}}\big)\Big].
	\end{align}
\end{lemma}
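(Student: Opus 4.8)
The plan is to apply the copula-based factorization of Corollary~\ref{col-pdf}, which expresses the joint PDF as the product of the two marginal PDFs and the copula density evaluated at the marginal CDFs. Since the marginals are already given in the excerpt as the exponential densities $f(\gamma_j)=\tfrac{1}{\bar{\gamma}_j}e^{-\gamma_j/\bar{\gamma}_j}$ and CDFs $F(\gamma_j)=1-e^{-\gamma_j/\bar{\gamma}_j}$ for $j\in\{m,e\}$, the entire task reduces to computing the FGM copula density and performing a substitution. I would therefore structure the proof in two short steps: (i) differentiate the FGM copula of Definition~\ref{def} to obtain its density $c(u_1,u_2)$, and (ii) substitute $u_j=F(\gamma_j)$ together with the exponential marginals.

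For step~(i), starting from $C_F(u_1,u_2)=u_1u_2\bigl(1+\theta(1-u_1)(1-u_2)\bigr)$, I would take the mixed second-order partial derivative $c(u_1,u_2)=\partial^2 C_F/\partial u_1\partial u_2$ as prescribed in Corollary~\ref{col-pdf}. Differentiating the bilinear correction term factors cleanly because $\partial_{u_1}\!\bigl[u_1(1-u_1)\bigr]=1-2u_1$ and likewise in $u_2$, so the copula density collapses to the compact form
\begin{align}
c(u_1,u_2)=1+\theta(1-2u_1)(1-2u_2).
\end{align}

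For step~(ii), I would insert $u_j=F(\gamma_j)=1-e^{-\gamma_j/\bar{\gamma}_j}$. The one point that requires care is the sign bookkeeping: each factor becomes $1-2u_j=-\bigl(1-2e^{-\gamma_j/\bar{\gamma}_j}\bigr)$, and the two resulting minus signs cancel in the product $(1-2u_1)(1-2u_2)$, reproducing exactly the bracketed term in the statement. Multiplying this evaluated density by the product of marginals $f(\gamma_m)f(\gamma_e)=\tfrac{1}{\bar{\gamma}_m\bar{\gamma}_e}e^{-\gamma_m/\bar{\gamma}_m-\gamma_e/\bar{\gamma}_e}$ then yields the claimed expression. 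I do not anticipate a genuine obstacle here, as the derivation is a direct mechanical computation; the only thing that could trip up the calculation is the double sign flip in the substitution, so I would flag that cancellation explicitly to keep the argument transparent.
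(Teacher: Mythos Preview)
Your proposal is correct and follows essentially the same route as the paper: the paper's own proof amounts to ``apply the partial derivatives to \eqref{eq-fgm}, insert into \eqref{pdf-copula}, and use the exponential marginals,'' which is precisely your two-step plan of computing $c(u_1,u_2)=1+\theta(1-2u_1)(1-2u_2)$ and substituting $u_j=F(\gamma_j)$. Your explicit note on the cancellation of the two minus signs in $1-2u_j$ is a helpful addition but does not constitute a different approach.
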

\begin{proof}
By applying the partial derivatives in \eqref{eq-fgm}, inserting it into \eqref{pdf-copula}, and then considering the SNRs marginal distribution, the proof is completed. 
\begin{Remark}
	\textcolor{blue}{It is possible incorporate the effect of fading severity into the analysis, i.e., to consider more general fading conditions than Rayleigh fading. This usually comes at the price of an increased mathematical complexity. For instance, assuming Nakagami-$m$ fading would require the use of a generalization of the correlated Gamma distribution (i.e. more complex than the one in \cite[Theorem 4]{ghadi2022capacity}, to include different fading severity parameters for Bob and Eve's channels. 
}
	\end{Remark}
	\end{proof}
\section{Secrecy Performance Metrics: ASC and SOP}
In this section, we derive closed-form expressions for ASC and SOP by exploiting the joint PDF obtained in section \ref{sec-copula}.
\subsection{ASC Analysis}\label{subsec-asc}
The ASC for considered system model based on Corollaries \ref{col-cm} and \ref{col-cs} can be defined as:
\begin{align}\label{eq-sc1}
\bar{\mathcal{C}}_s^{1}=\int_{0}^{\infty}\int_{0}^{\infty}\log_2\left(1+\gamma_m\right) f\left(\gamma_m,\gamma_{e}\right)d\gamma_{e}d\gamma_m,
\end{align}
\begin{align}\label{eq-sc2}
\bar{\mathcal{C}}_s^{2}=\int_{0}^{\infty}\int_{0}^{\bar{\gamma}}\log_2\left(\frac{1+\bar{\gamma}_{ms}+\gamma_m}{1+\bar{\gamma}_{es}+\gamma_{e}}\right) f\left(\gamma_m,\gamma_{e}\right)d\gamma_{e}d\gamma_m,
\end{align}
where $\bar{\gamma}=\gamma_m+\bar{\gamma}_{ms}-\bar{\gamma}_{es}$. 
\begin{theorem}\label{thm-asc}
	The ASC for concerned correlated Rayleigh fading wiretap channel with non-causally known SI at the transmitter under consideration of the Corollary \ref{col-cm} and Corollary \ref{col-cs} is determined as:
	\begin{equation}\label{}
		\bar{\mathcal{C}}_s=
		\begin{cases}
			\bar{\mathcal{C}}_s^1, \quad\quad\;\;\ \text{if}\;\; \text{Corollary 1} \\
			\bar{\mathcal{C}}_s^2, \quad\quad\;\;\; \text{if}\;\;\text{Corollary 2}
		\end{cases},
	\end{equation}
where $\bar{\mathcal{C}}_s^1$ and $\bar{\mathcal{C}}_s^2$ are given by 	 \eqref{eq-asc1} and \eqref{eq-asc2}, respectively; and $\mathrm{E}_1(t)=\int_{t}^{\infty}\frac{\mathrm{e}^{-z}}{z}dz$ is the Exponential Integral.
%
	\begin{figure*}[!t]
		\normalsize
		\begin{align}\label{eq-asc1}
	\bar{\mathcal{C}}_s^1=\frac{\mathrm{e}^{\frac{1}{\bar{\gamma}_m}}}{\ln 2}\mathrm{E}_1\left(\frac{1}{\bar{\gamma}_m}\right),
		\end{align}
		\begin{align}\nonumber\label{eq-asc2}
&\bar{\mathcal{C}}_s^{2}=\frac{1}{\ln 2}\Bigg(\ln\left(\frac{1+\bar{\gamma}_{ms}}{1+\bar{\gamma}_{es}}\right)+\mathrm{e}^\frac{1+\bar{\gamma}_{ms}}{\bar{\gamma}_m}\mathrm{E}_1\left(\frac{1+\bar{\gamma}_{ms}}{\bar{\gamma}_m}\right)-\mathrm{e}^{\frac{1+\bar{\gamma}_{es}}{\bar{\gamma}_{e}}}\left(\mathrm{E}_1\left(\frac{1+\bar{\gamma}_{ms}}{\bar{\gamma}_e}\right)-\mathrm{E}_1\left(\frac{1+\bar{\gamma}_{es}}{\bar{\gamma}_e}\right)\right)\\\nonumber
&-\mathrm{e}^{\frac{1+\bar{\gamma}_{es}}{\bar{\gamma}_e}+\frac{1+\bar{\gamma}_{ms}}{\bar{\gamma}_m}}\Bigg(\mathrm{E}_1\left(\frac{(\bar{\gamma}_m+\bar{\gamma}_e)(1+\bar{\gamma}_{ms})}{\bar{\gamma}_m\bar{\gamma}_e}\right)+\theta\Bigg[\mathrm{E}_1\left(\frac{(\bar{\gamma}_m+\bar{\gamma}_e)(1+\bar{\gamma}_{ms})}{\bar{\gamma}_m\bar{\gamma}_e}\right)-\mathrm{e}^{\frac{(1+\bar{\gamma}_{ms})}{\bar{\gamma}_m}}\mathrm{E}_1\left(\frac{(\bar{\gamma}_m+2\bar{\gamma}_e)(1+\bar{\gamma}_{ms})}{\bar{\gamma}_m\bar{\gamma}_e}\right)\\
&-\mathrm{e}^{\frac{(1+\bar{\gamma}_{es})}{\bar{\gamma}_e}}\mathrm{E}_1\left(\frac{(2\bar{\gamma}_m+\bar{\gamma}_e)(1+\bar{\gamma}_{ms})}{\bar{\gamma}_m\bar{\gamma}_e}\right)+\mathrm{e}^{\frac{(1+\bar{\gamma}_{es})}{\bar{\gamma}_e}+\frac{(1+\bar{\gamma}_{ms})}{\bar{\gamma}_m}}\mathrm{E}_1\left(\frac{2(\bar{\gamma}_m+\bar{\gamma}_e)(1+\bar{\gamma}_{ms})}{\bar{\gamma}_m\bar{\gamma}_e}\right)\Bigg]\Bigg)\Bigg).
		\end{align}
				\hrulefill\vspace{0ex}
		\vspace*{0pt}
		\end{figure*}
	\end{theorem}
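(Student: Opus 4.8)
The plan is to evaluate the two double integrals \eqref{eq-sc1} and \eqref{eq-sc2} directly by substituting the joint PDF from Lemma \ref{lemma-pdf} and integrating term by term. For the first case, corresponding to Corollary \ref{col-cm}, the integrand $\log_2(1+\gamma_m)$ does not depend on $\gamma_e$, so I would first carry out the inner integration over $\gamma_e\in[0,\infty)$. The key observation here is that integrating the FGM copula density over the full range of one variable annihilates the dependence term: since $\int_0^\infty \tfrac{1}{\bar\gamma_e}e^{-\gamma_e/\bar\gamma_e}(1-2e^{-\gamma_e/\bar\gamma_e})\,d\gamma_e = 0$, the coupling parameter $\theta$ drops out entirely, reducing the problem to the marginal $f(\gamma_m)$. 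This explains why \eqref{eq-asc1} is independent of $\theta$. What remains is the single integral $\frac{1}{\ln 2}\int_0^\infty \ln(1+\gamma_m)\tfrac{1}{\bar\gamma_m}e^{-\gamma_m/\bar\gamma_m}\,d\gamma_m$, which I would evaluate by integration by parts followed by the standard identity $\int_0^\infty \tfrac{e^{-t/\bar\gamma_m}}{1+t}\,dt = e^{1/\bar\gamma_m}\mathrm{E}_1(1/\bar\gamma_m)$, yielding \eqref{eq-asc1}.

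The second case, corresponding to Corollary \ref{col-cs}, is substantially more involved and will be the main obstacle. First I would split the logarithm as $\log_2\!\big(\tfrac{1+\bar\gamma_{ms}+\gamma_m}{1+\bar\gamma_{es}+\gamma_e}\big) = \tfrac{1}{\ln 2}\big[\ln(1+\bar\gamma_{ms}+\gamma_m) - \ln(1+\bar\gamma_{es}+\gamma_e)\big]$, and then expand the bracketed FGM factor in the joint PDF into its independent part (the $1$) and its dependent part (the $\theta$ term). This produces a collection of integrals over the region $0\le \gamma_e \le \bar\gamma = \gamma_m + \bar\gamma_{ms}-\bar\gamma_{es}$, each of the form $\iint (\log\text{-term})\,e^{-a\gamma_m}e^{-b\gamma_e}\,d\gamma_e\,d\gamma_m$ with exponents $a,b$ drawn from $\{1/\bar\gamma_m,2/\bar\gamma_m\}$ and $\{1/\bar\gamma_e,2/\bar\gamma_e\}$ because of the $(1-2e^{-\gamma_e/\bar\gamma_e})$ and $(1-2e^{-\gamma_m/\bar\gamma_m})$ factors. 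The nontrivial dependence of the upper limit $\bar\gamma$ on $\gamma_m$ couples the two integrations, so the inner $\gamma_e$-integral over a finite interval will generate both polynomial-in-$\bar\gamma$ boundary contributions and exponential terms $e^{-b\,\bar\gamma}$ that feed back into the outer $\gamma_m$-integral.

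The hard part will be managing this proliferation of terms and repeatedly reducing the resulting single integrals to the exponential-integral form. After the inner integration, each surviving outer integral over $\gamma_m$ will involve either $\ln(1+\bar\gamma_{ms}+\gamma_m)$ against a decaying exponential — handled again by integration by parts and the $\mathrm{E}_1$ identity — or logarithmic-and-exponential cross terms that, after the substitution shifting $\gamma_m$ by $1+\bar\gamma_{ms}$, collapse to arguments of the form $\frac{(\bar\gamma_m+\bar\gamma_e)(1+\bar\gamma_{ms})}{\bar\gamma_m\bar\gamma_e}$ and the doubled variants $\frac{(\bar\gamma_m+2\bar\gamma_e)(\cdot)}{\bar\gamma_m\bar\gamma_e}$, $\frac{(2\bar\gamma_m+\bar\gamma_e)(\cdot)}{\bar\gamma_m\bar\gamma_e}$, $\frac{2(\bar\gamma_m+\bar\gamma_e)(\cdot)}{\bar\gamma_m\bar\gamma_e}$ that appear in \eqref{eq-asc2}. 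The exponential prefactors such as $e^{(1+\bar\gamma_{ms})/\bar\gamma_m}$ and $e^{(1+\bar\gamma_{es})/\bar\gamma_e}$ arise precisely from completing the $\mathrm{E}_1$ arguments at these shifted lower limits. I would finally collect the $\theta$-free terms into the first line of \eqref{eq-asc2} and group the four $\mathrm{E}_1$ contributions weighted by $\theta$ into the bracket, verifying that the independent case $\theta=0$ recovers the expected blank-channel ASC as a consistency check. The bookkeeping — rather than any single difficult integral — is where the real effort lies, and careful tracking of the finite upper limit $\bar\gamma$ throughout is essential.
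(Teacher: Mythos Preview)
Your proposal is correct and follows essentially the same route as the paper's own proof: substitute the FGM joint PDF from Lemma~\ref{lemma-pdf} into \eqref{eq-sc1} and \eqref{eq-sc2}, split $\log_2\!\big(\tfrac{1+\bar\gamma_{ms}+\gamma_m}{1+\bar\gamma_{es}+\gamma_e}\big)$ into two logarithmic pieces, expand the factor $[1+\theta(1-2e^{-\gamma_m/\bar\gamma_m})(1-2e^{-\gamma_e/\bar\gamma_e})]$ into four exponential terms, and reduce each resulting single integral to the exponential-integral identities $\int_0^\infty e^{-\zeta t}\ln(1+\kappa+t)\,dt$ and $\int_0^\infty e^{-\zeta t}\mathrm{E}_1(\delta+\nu t)\,dt$. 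Your observation for $\bar{\mathcal{C}}_s^1$ that $\int_0^\infty \tfrac{1}{\bar\gamma_e}e^{-\gamma_e/\bar\gamma_e}(1-2e^{-\gamma_e/\bar\gamma_e})\,d\gamma_e=0$ kills the $\theta$-term up front is a slightly cleaner shortcut than the paper's brute-force expansion into $\mathcal{W}_1+\theta[\mathcal{W}_1-2\mathcal{W}_2-2\mathcal{W}_3+4\mathcal{W}_4]$, but the underlying computation is identical.
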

\begin{proof}
The details of the proof are in Appendix \ref{app-thm-asc}.
\end{proof}
\subsection{SOP Analysis}\label{subsec-sop}
The SOP is defined as the probability that the random SC $\mathcal{C}_s$ is less than a target secrecy rate $R_s>0$, or:
	\begin{align}\label{eq-sop-def}
		P_{sop}=\mathrm{Pr}(\mathcal{C}_s\le R_s)=1-\mathrm{Pr}(\mathcal{C}_s>R_s).
	\end{align}
So, the SOP under consideration of Corollary \ref{col-cm} is defined as:
	\begin{align}
	P_{sop}^1&=1-\mathrm{Pr}\left(\log_2\left(1+\gamma_m\right)>R_s\right)\\ 
	&=1-\mathrm{Pr}\left(\gamma_m>2^{R_s}-1\right)\\
	&=1-\int_{0}^{\infty}\int_{2^{R_s}-1}^{\infty}f(\gamma_m,\gamma_e)d\gamma_md\gamma_e,\label{eq-sop-1}
\end{align}
and similarly, the SOP under consideration of Corollary \ref{col-cs} is given by
\begin{align}
	P_{sop}^{2}&=1-\mathrm{Pr}\left(\log\left(\frac{1+\bar{\gamma}_{ms}+\gamma_m}{1+\bar{\gamma}_{es}+\gamma_e}\right)>R_s\right)\\ 
	&=1-\mathrm{Pr}\left(\gamma_m>2^{R_s}\left(1+\bar{\gamma}_{es}+\gamma_e\right)-\left(1+\bar{\gamma}_{ms}\right)\right)\\
	&=1-\int_{0}^{\infty}\int_{\gamma_{th}}^{\infty}f(\gamma_m,\gamma_e)d\gamma_md\gamma_e,\label{eq-sop2}
\end{align}
where $\gamma_{th}=2^{R_s}\left(1+\bar{\gamma}_{es}+\gamma_e\right)-\left(1+\bar{\gamma}_{ms}\right)$.
\begin{theorem}\label{thm-sop}
	The SOP for concerned correlated Rayleigh fading wiretap channel with non-causally known SI at the transmitter under consideration of the Corollary \ref{col-cm} and Corollary \ref{col-cs} is given by
	\begin{equation}\label{}
	P_{sop}=
	\begin{cases}
		P_{sop}^1, \quad\quad\;\;\ \text{if}\;\; \text{Corollary 1} \\
		P_{sop}^2, \quad\quad\;\;\; \text{if}\;\;\text{Corollary 2}
	\end{cases},
\end{equation}
where $P_{sop}^1$ and $P_{sop}^2$ are expressed as follows:
\begin{align}
	P_{sop}^1=&\,1-\mathrm{e}^{-\frac{\left(2^{R_s}-1\right)}{\bar{\gamma}_m}},\label{eq-sop1}
\end{align}

\begin{align}\nonumber
	P_{sop}^2=&\,1-\Bigg[\frac{\bar{\gamma}_m\mathrm{e}^{\frac{-\bar{\gamma}_{th}}{\bar{\gamma}_m}}}{\bar{\gamma}_m+2^{R_s}\bar{\gamma}_e}+\theta\Bigg(\frac{\bar{\gamma}_m\mathrm{e}^{\frac{-\bar{\gamma}_{th}}{\bar{\gamma}_m}}}{\bar{\gamma}_m+2^{R_s}\bar{\gamma}_e}-\frac{\bar{\gamma}_m\mathrm{e}^{\frac{-2\bar{\gamma}_{th}}{\bar{\gamma}_m}}}{\bar{\gamma}_m+2^{R_s+1}\bar{\gamma}_e}\\
	&-\frac{2\bar{\gamma}_m\mathrm{e}^{\frac{-\bar{\gamma}_{th}}{\bar{\gamma}_m}}}{2\bar{\gamma}_m+2^{R_s}\bar{\gamma}_e}+\frac{\bar{\gamma}_m\mathrm{e}^{\frac{-2\bar{\gamma}_{th}}{\bar{\gamma}_m}}}{\bar{\gamma}_m+2^{R_s}\bar{\gamma}_e}\Bigg)\Bigg],
\end{align}
and  $\bar{\gamma}_{th}=2^{R_s}\left(1+\bar{\gamma}_{es}\right)-\left(1+\bar{\gamma}_{ms}\right)$.
\end{theorem}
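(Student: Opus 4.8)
The plan is to substitute the joint PDF of Lemma~\ref{lemma-pdf} into the double integrals \eqref{eq-sop-1} and \eqref{eq-sop2} and evaluate them by elementary exponential integration, treating the two corollary branches separately. For the Corollary~\ref{col-cm} branch the region $\{\gamma_m>2^{R_s}-1\}$ is independent of $\gamma_e$, so I would integrate over $\gamma_e\in(0,\infty)$ first; the copula correction averages out over the full marginal range, since $\int_0^\infty \tfrac{1}{\bar{\gamma}_e}\mathrm{e}^{-\gamma_e/\bar{\gamma}_e}\bigl(1-2\mathrm{e}^{-\gamma_e/\bar{\gamma}_e}\bigr)\,d\gamma_e=0$. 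This collapses $f(\gamma_m,\gamma_e)$ to the exponential marginal of $\gamma_m$, and the tail integral $\int_{2^{R_s}-1}^\infty \tfrac{1}{\bar{\gamma}_m}\mathrm{e}^{-\gamma_m/\bar{\gamma}_m}\,d\gamma_m=\mathrm{e}^{-(2^{R_s}-1)/\bar{\gamma}_m}$ gives \eqref{eq-sop1} at once.

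The substantive case is the Corollary~\ref{col-cs} branch, where the inner variable is $\gamma_m$ with the $\gamma_e$-dependent lower limit $\gamma_{th}=\bar{\gamma}_{th}+2^{R_s}\gamma_e$. Expanding the bracket of Lemma~\ref{lemma-pdf} splits the integrand into a term proportional to $\mathrm{e}^{-\gamma_m/\bar{\gamma}_m}$ and, from the $\theta$ cross-term, a term proportional to $\mathrm{e}^{-2\gamma_m/\bar{\gamma}_m}$; integrating each over $(\gamma_{th},\infty)$ produces the factors $\mathrm{e}^{-\gamma_{th}/\bar{\gamma}_m}$ and $\tfrac12\mathrm{e}^{-2\gamma_{th}/\bar{\gamma}_m}$. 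I would then use $\gamma_{th}=\bar{\gamma}_{th}+2^{R_s}\gamma_e$ to peel off the constants $\mathrm{e}^{-\bar{\gamma}_{th}/\bar{\gamma}_m}$ and $\mathrm{e}^{-2\bar{\gamma}_{th}/\bar{\gamma}_m}$, leaving pure exponentials in $\gamma_e$.

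The remaining outer integral over $\gamma_e\in(0,\infty)$ is then elementary: after also expanding the copula factor $1-2\mathrm{e}^{-\gamma_e/\bar{\gamma}_e}$, every surviving integrand has the form $\mathrm{e}^{-c\gamma_e}$, with $\int_0^\infty \tfrac{1}{\bar{\gamma}_e}\mathrm{e}^{-c\gamma_e}\,d\gamma_e=1/(c\bar{\gamma}_e)$. The four rate constants $c\in\{\tfrac{1}{\bar{\gamma}_e}+\tfrac{2^{R_s}}{\bar{\gamma}_m},\,\tfrac{2}{\bar{\gamma}_e}+\tfrac{2^{R_s}}{\bar{\gamma}_m},\,\tfrac{1}{\bar{\gamma}_e}+\tfrac{2^{R_s+1}}{\bar{\gamma}_m},\,\tfrac{2}{\bar{\gamma}_e}+\tfrac{2^{R_s+1}}{\bar{\gamma}_m}\}$ yield the denominators $\bar{\gamma}_m+2^{R_s}\bar{\gamma}_e$, $2\bar{\gamma}_m+2^{R_s}\bar{\gamma}_e$, $\bar{\gamma}_m+2^{R_s+1}\bar{\gamma}_e$ and $2(\bar{\gamma}_m+2^{R_s}\bar{\gamma}_e)$ respectively; in the last, the leading factor $2$ cancels against the coefficient of the $\mathrm{e}^{-2\gamma_m/\bar{\gamma}_m}$ term, so exactly the three distinct denominators of $P_{sop}^2$ remain. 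Collecting the $\theta^0$ and $\theta^1$ coefficients and subtracting from unity reproduces the stated expression.

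The main obstacle I anticipate is not the integration but the sign of the threshold: $\gamma_{th}=\bar{\gamma}_{th}+2^{R_s}\gamma_e$ is negative for small $\gamma_e$ whenever $\bar{\gamma}_{th}<0$, in which case the physical lower limit of the inner integral is $\gamma_m=0$ rather than $\gamma_{th}$. A fully rigorous treatment would split the $\gamma_e$-integration at $\gamma_e^\star=-\bar{\gamma}_{th}/2^{R_s}$ and handle the truncated sub-region separately; the closed form stated here corresponds to the regime $\bar{\gamma}_{th}\ge 0$, so I would either restrict to that regime or argue that the contribution of the $\gamma_{th}<0$ portion is negligible in the operating range of interest.
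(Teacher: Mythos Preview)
Your proposal is correct and follows essentially the same route as the paper's proof: substitute the FGM joint density of Lemma~\ref{lemma-pdf} into \eqref{eq-sop-1} and \eqref{eq-sop2}, expand the bracket into the four exponential terms $\mathcal{I}_\eta$ (resp.\ $\mathcal{J}_\eta$), and evaluate each double integral elementarily. Your observation that the $\theta$-correction vanishes for $P_{sop}^1$ because $\int_0^\infty \tfrac{1}{\bar{\gamma}_e}\mathrm{e}^{-\gamma_e/\bar{\gamma}_e}(1-2\mathrm{e}^{-\gamma_e/\bar{\gamma}_e})\,d\gamma_e=0$ is a tidy shortcut the paper does not spell out, and your caveat about the sign of $\bar{\gamma}_{th}$ is a valid technical point the paper leaves implicit.
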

\begin{proof}
The details of proof are in Appendix \ref{app-thm-sop}.
\end{proof}
\section{Numerical Results}\label{sec-results}
In this section, the analytical expressions previously derived and Monte-Carlo (MC) simulation for the ASC and SOP are presented, with the special focus on comparing the performances in the presence/absence of SI and fading correlation. It should be noted that for the case of analyzing ASC under correlated fading, additional simulations are included using the Frank copula \cite{nelsen2007introduction} to extend the analytical results obtained with the FGM copula to a wider range of values for both positive and negative dependencies by exploiting Frank's copula parameter $\zeta\in\mathbb{R}\backslash\{0\}$.
\begin{figure}[!t]\vspace{0ex}
	\centering
	\includegraphics[width=.9\columnwidth]{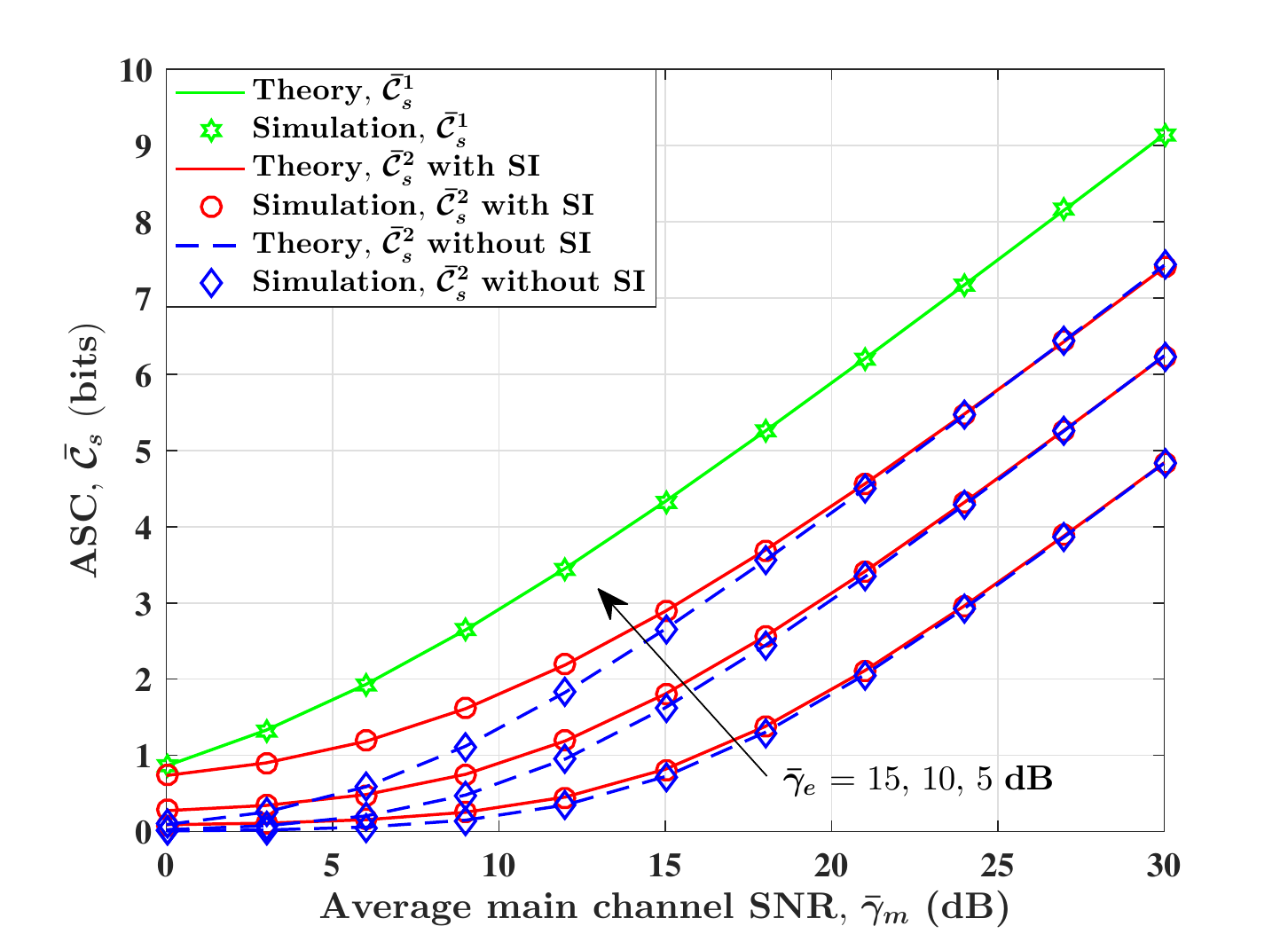} 
	\caption{The ASC versus $\bar{\gamma}_m$ for selected values of $\bar{\gamma}_e$, a target secrecy rate $R_s=1.5$ bits, the FGM dependence parameter $\theta=1$, and given values of SI $\bar{\gamma}_{ms}=5$ dB, $\bar{\gamma}_{es}=-10$ dB.} 
	\label{asc_gm_n}
\end{figure}
Fig. \ref{asc_gm_n} shows the impact of SI on the performance of ASC $\bar{\mathcal{C}}_s$ based on the variations of the average main channel SNR $\bar{\gamma}_m$ for selected values of the average eavesdropper channel SNR $\bar{\gamma}_e$ and given values of SI under positive dependence structure. It can be seen that SI does not affect the performance of the ASC under consideration of Corollary \ref{col-cm} since $\mathcal{C}_s^1$ is independent of SI and only depends on $\gamma_{m}$ based on SC definition in \eqref{Css}. On the other hand, it is observed that the SI has positive effects on the behavior of the ASC under Corollary \ref{col-cs}, so that for the low SNR regime the efficiency of SI is more tangible as compared with the high SNR regime. To further evaluate the impact of SI, the behavior of ASC based on the SI ratio (i.e., $\bar{\gamma}_{ms}/\bar{\gamma}_{es}$) for three different scenarios $\bar{\gamma}_m>\bar{\gamma}_e$, $\bar{\gamma}_m=\bar{\gamma}_e$, and $\bar{\gamma}_m<\bar{\gamma}_e$ under positive dependence structure is illustrated in Fig. \ref{asc_si_ratio_n}. It is clear that for all three scenarios, $\bar{\mathcal{C}}_s^1$ is constant during the changes of the SI ratio and only depends on the values of $\bar{\gamma}_m$. On the other side, $\bar{\mathcal{C}}_s^2$ continuously increases by increasing the SI ratio for all scenarios. The interesting point is that even under the condition that the main channel is worse than the eavesdropper channel (i.e., $\bar{\gamma}_m<\bar{\gamma}_e$), $\bar{\mathcal{C}}_s^2$ still grows and the ASC is achievable. In Fig. \ref{sop_gm_n}, the effect of SI on the performance of SOP based on the variations of $\bar{\gamma}_m$ for selected values of $\bar{\gamma}_e$ and given values of SI under positive dependence structure is provided. It can be seen that $P_{sop}^1$ provides the lowest values of SOP and it is independent of the SI and $\bar{\gamma}_e$. In contrast, considering SI at the transmitter has constructive effects on the performance of $P_{sop}^2$ so that for all selected values of $\bar{\gamma}_e$, a lower value of SOP is achievable as compared to the case where there is no SI. 
\begin{figure}[!t]\vspace{0ex}
	\centering
	\includegraphics[width=.9\columnwidth]{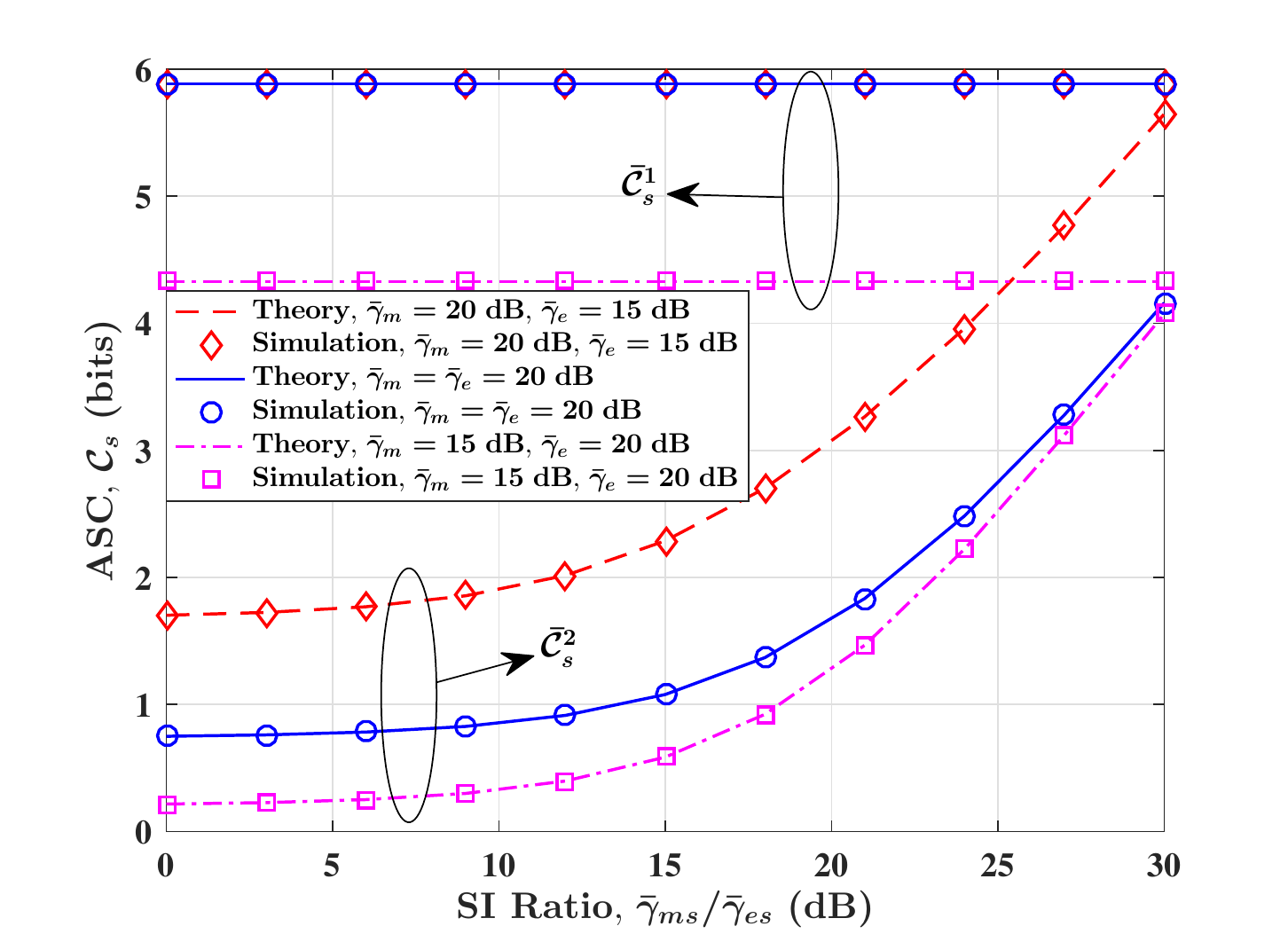} 
	\caption{The ASC versus $\bar{\gamma}_{ms}/\bar{\gamma}_{es}$for a target secrecy rate $R_s=1.5$ bits, the FGM dependence parameter $\theta=1$, and different scenarios when, (a) the main channel's condition is better than the eavesdropper's ($\bar{\gamma}_m>\bar{\gamma}_e$), (b) the main channel's condition is the same as eavesdropper's ($\bar{\gamma}_m=\bar{\gamma}_e$), and (c) the main channel's condition is worse than the eavesdropper's ($\bar{\gamma}_m<\bar{\gamma}_e$).} 
	\label{asc_si_ratio_n}
\end{figure}
\begin{figure}[!t]\vspace{0ex}
	\centering
	\includegraphics[width=.9\columnwidth]{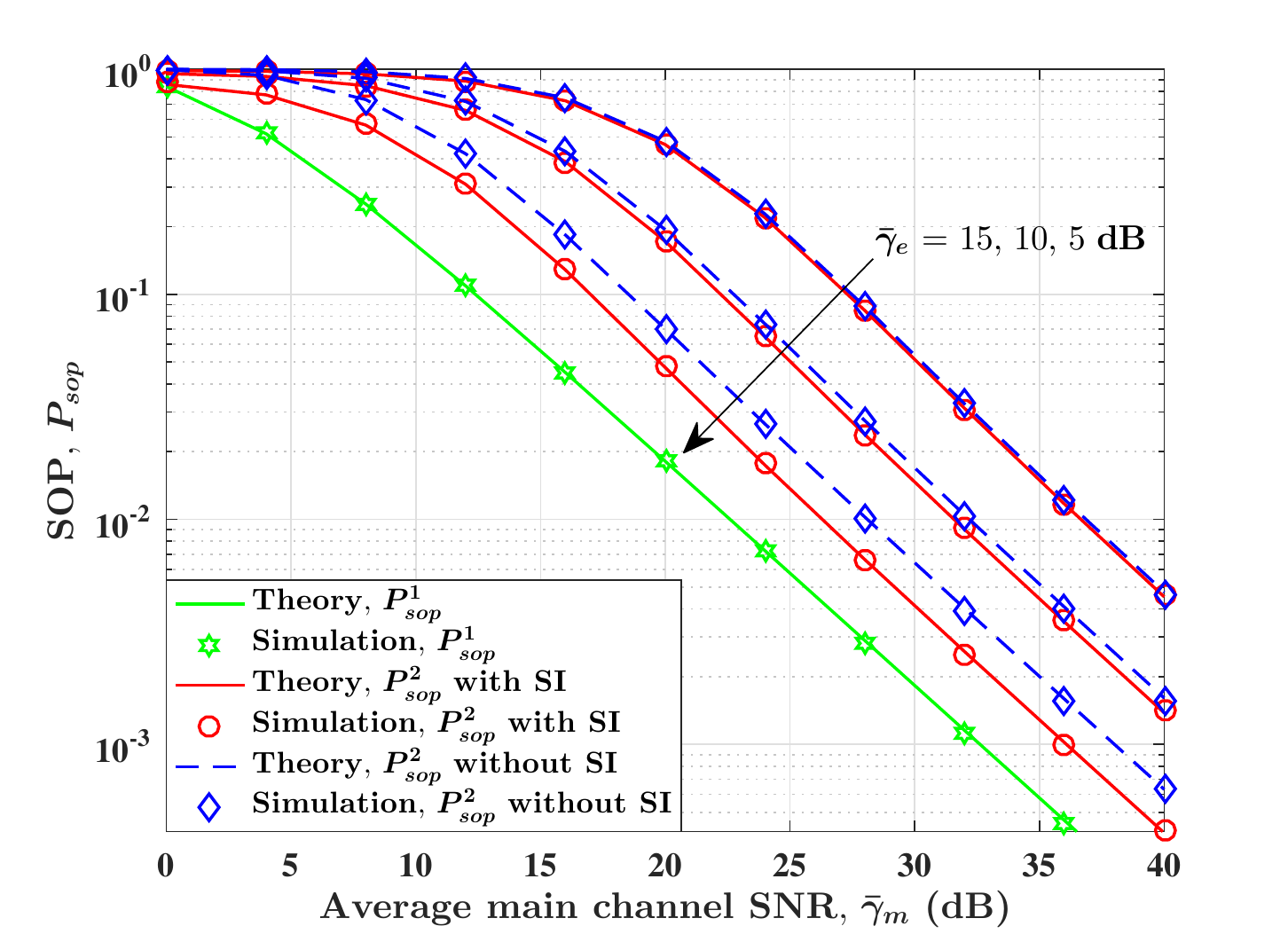} 
	\caption{The SOP versus $\bar{\gamma}_m$ for selected values of $\bar{\gamma}_e$, a target secrecy rate $R_s=1.5$ bits, the FGM dependence parameter $\theta=1$, and given values of SI $\bar{\gamma}_{ms}=5$ dB, $\bar{\gamma}_{es}=-5$ dB.} 
	\label{sop_gm_n}
\end{figure}
From a correlation viewpoint, we evaluate the impact of fading correlation on the performance of ASC and SOP in the presence of SI in Figs. \ref{asc_gm_corr_n} and \ref{sop_gm_corr_n}. In these cases, we have modeled the correlated fading channels based on the copula approach where the dependence parameter shows the measure of dependency. In this regard, it should be noted there is a relation between the linear correlation coefficient $\rho\overset{\Delta}{=}\mathrm{cov}[\gamma_{m}\gamma_{e}]/\sqrt{\mathrm{var}[\gamma_{m}]\mathrm{var}[\gamma_{e}]}$ and the copula dependence parameter. Therefore, to have a clearer insight with conventional correlation, we also approximate the corresponding values of $\rho$ for the considered copula dependence parameters. Fig. \ref{asc_gm_corr_n} shows the behavior of ASC in terms of $\bar{\gamma}_m$ under correlated Rayleigh fading channels for selected values of the dependence parameters and SI. Given the independence of $\bar{\mathcal{C}}_s^1$ from correlation and SI, it is clear that $\bar{\mathcal{C}}_s^1$ gains higher values in terms of ASC during the changes of $\bar{\gamma}_m$. In contrast, it can be observed that the correlation effect is gradually eliminated in the high SNR regime in terms of $P_{sop}^2$.  We also see that positive dependence worsens the ASC performance compared to the independent fading case since as the channels become more correlated, they tend to behave more similarly, and thus the chances for the transmitter to transmit at a high secrecy rate decreases. It should be noted that the Fr{\'e}chet-Hoeffding bounds are not symmetric w.r.t. $\rho=0$, but they tend to $\pm1$. Therefore, we observe that the Frank copula gets close to the Fr{\'e}chet-Hoeffding bounds for $\zeta=\pm35$ in Fig. \ref{asc_gm_corr_n}, which exhibits a non-symmetric behavior. We also see that correlation for the case with FGM copula is now symmetric since such copula only can model weak dependences, and hence is distant to the Fr{\'e}chet-Hoeffding bounds, thus not reaching the maximum permissive range of $\pm1$ for $\rho$.
\begin{figure}[!t]\vspace{0ex}
	\centering
	\includegraphics[width=.9\columnwidth]{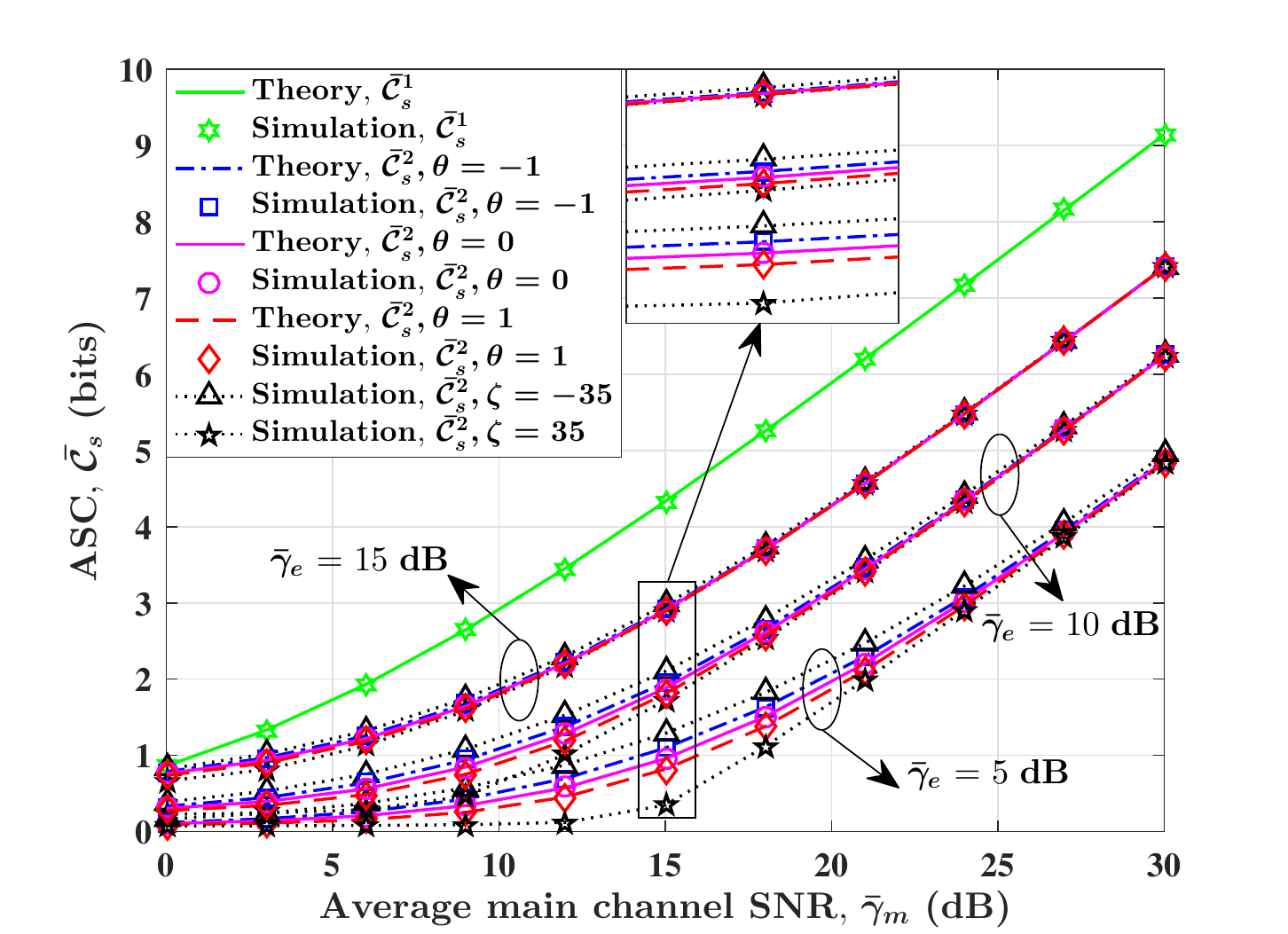} 
	\caption{The ASC versus $\bar{\gamma}_m$ for selected values of $\bar{\gamma}_e$, a target secrecy rate $R_s=1.5$ bits, the FGM dependence parameter $\theta=\pm1$ $(\rho\approx\pm0.25)$, $\zeta=\pm35$ $(\rho\approx[-0.63,0.92])$ and given values of SI $\bar{\gamma}_{ms}=5$ dB, $\bar{\gamma}_{es}=-10$ dB.} 
	\label{asc_gm_corr_n}
\end{figure}
\begin{figure}[!t]\vspace{0ex}
	\centering
	\includegraphics[width=.9\columnwidth]{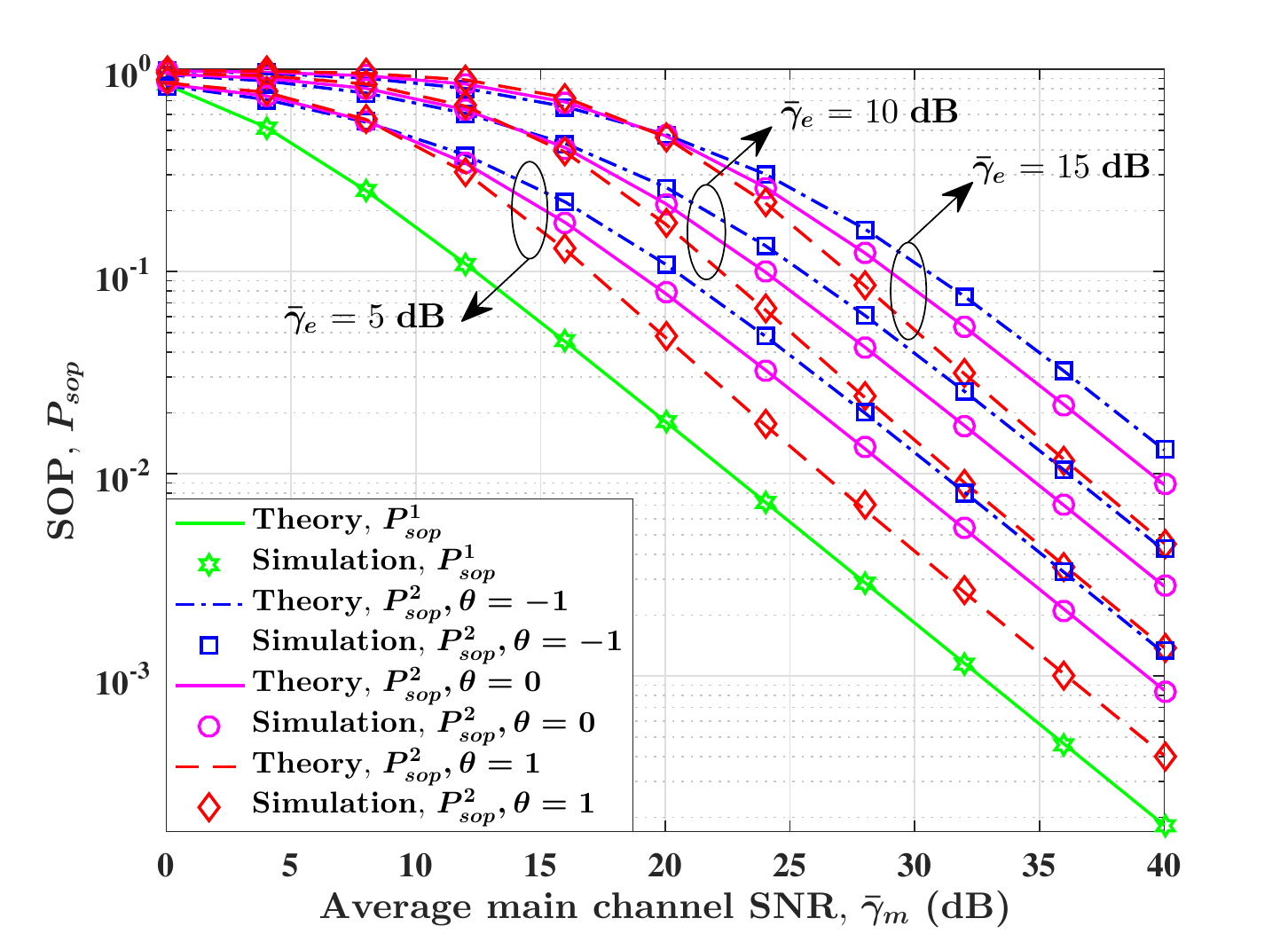} 
	\caption{The SOP versus $\bar{\gamma}_m$ for selected values of $\bar{\gamma}_e$, a target secrecy rate $R_s=1.5$ bits, the FGM dependence parameter $\theta=\pm1$ $(\rho\approx\pm0.25)$, $\zeta=\pm35$ $(\rho\approx[-0.63,0.92])$ and given values of SI $\bar{\gamma}_{ms}=5$ dB, $\bar{\gamma}_{es}=-5$ dB.} 
	\label{sop_gm_corr_n}
\end{figure}

In Fig. \ref{sop_gm_corr_n} the performance of SOP in terms of $\bar{\gamma}_m$ under correlated Rayleigh fading channels for selected values of dependence parameters and SI is represented. As expected, $P_{sop}^1$ provides the lowest values of SOP during $\bar{\gamma}_m$ variations regardless the correlation effects. In contrast, it can be seen that when the main channel is better than the eavesdropper channel ($\bar{\gamma}_m>\bar{\gamma}_e$), $P_{sop}^2$ will be achieved less than $0.5$, while under the condition that $\bar{\gamma}_m\leq\bar{\gamma}_e$, $P_{sop}^2$ becomes greater than $0.5$. In this case, 
we can see that the positive dependence improves the SOP performance as compared with the independent fading case when $P_{sop}^2<0.5$, since if $\bar{\gamma}_m>\bar{\gamma}_e$, then the larger the correlation level, the higher the probability of having $\gamma_m>\gamma_e$. 

\section{Conclusion}\label{sec-conclusion}
In this paper, we analyzed the impact of SI on PLS performance under correlated fading channels. To this end, we derived the closed-from expressions for ASC and SOP by exploiting copula theory in the presence of SI and fading correlation. We proved that considering the SI at the transmitter, compared with the \textit{blank} fading wiretap channel considered in \cite{ghadi2020copula}, improves the efficiency of the system model in terms of ASC and SOP. Hence, an increase in SI provides a higher value of ASC as well as a lower SOP. We also showed that for a fixed SI, an increment in the correlation decreases the ASC in the low SNR regime, while improves the performance of SOP in the specific cases. \textcolor{blue}{The extension of these results to consider different interfering sources affecting Bob and Eve, their impact on the system's SI, the effect of causal SI knowledge, and the consideration of more general fading conditions are potential lines for future research activities.}
\appendices
\section{Proof of Corollaries \ref{col-cm} and \ref{col-cs}}\label{app-coll}
\textcolor{blue}{Regarding the definition of $R_s$ and $R_e$, we have $R_s\leq R_e$. Besides, since $I(U_e;Y_e)\leq I(U_e;S)$ and $U_e$ maximizes $I(U_e;Y_m)-I(U_e;Y_e)$, it follows that:
\begin{align}
	R_e&=I(U_e;Y_m)-I(U_e;Y_e)\\
	&=I(U_e;Y_m)-\max\left\{I(U_e;S)-I(U_e;Y_e)\right\}\leq R_s.
\end{align}
Thus, we have $R_s=R_e$. On the other hand, it is known that $\mathcal{C}_s\geq R_s=R_e$. Hence, it is only needed to prove $C_s\leq R_e$. For this purpose, by exploiting the data processing theorem and Fano's inequality, we have:
\begin{align}
H\left(W^k|Y_m^n,Y_e^n\right)\leq & H\left(W^k|Y_m^k\right)\leq H\left(W^k|\hat{W}^k\right)\\
&\leq \mathrm{h}\left(\mathcal{P}_E\right)+nR\mathcal{P}_E,
\end{align}
where $\mathrm{h}(.)$ is the binary entropy function. Now, using the definitions $R=H\left(W^k\right)/n$ and $d=H\left(W^k|Y_e^n\right)/H\left(W^k\right)$, $nRd$ can be determined as:
\begin{align}
&nRd=H\left(W^k|Y_e^n\right)\\
&\hspace{-0.1cm}\leq H\left(W^k|Y_e^n\right)-H\left(W^k|Y_m^n,Y_e^n\right)+\mathrm{h}\left(\mathcal{P}_E\right)+nR\mathcal{P}_E\\
&\overset{(a)}{\leq}\sum_{i=1}^{n}I\left(W^k;Y_{m_i}|Y_{m_1}^{i-1},Y_e^n\right)+\mathrm{h}\left(\mathcal{P}_E\right)+nR\mathcal{P}_E\\
&\hspace{-0.2cm}\overset{(b)}{=}\sum_{i=1}^{n}I\left(W^k;Y_{m_i}|Y_{m_1}^{i-1},Y_e^n,S_1^{i-1},S_{i+1}^n\right)+\mathrm{h}\left(\mathcal{P}_E\right)+nR\mathcal{P}_E\\
&\overset{(c)}{\leq} \sum_{i=1}^{n}I\left(U_i;Y_{m_i}|Y_{e_i}\right)+\mathrm{h}\left(\mathcal{P}_E\right)+nR\mathcal{P}_E\\
&\overset{(d)}{=}\sum_{i=1}^{n}\left[I\left(U_i;Y_{m_i}\right)-I\left(U_i;Y_{e_i}\right)\right]+\mathrm{h}\left(\mathcal{P}_E\right)+nR\mathcal{P}_E,
\end{align}
where $(a)$  is obtained from the fact of chain rule for information, $(b)$ is followed from the fact that $S^n$ is independent of $W^k$, $(c)$ is achieved from the assumption that $U_i=\left(W^k,Y_{m_1}^{i-1},Y_{e_1}^{i-1},Y_{e_{i+1}}^n,S_1^{i-1},S_{i+1}^n\right)$, and $(d)$ is followed from the fact that $\left(U_i,S_i\right)\rightarrow Y_{m_i}\rightarrow Y_{e_i}$ forms a Markov chain. Now, by choosing $i^*$ to be the index $i$ such that $I\left(U_{i^*};Y_{m_{i^*}}\right)-I\left(U_{i^*};Y_{e_{i^*}}\right)=\underset{l\in\left[1,n\right]}{\max}\left\{I\left(U_l;Y_{m_l}\right)-I\left(U_l;Y_{e_l}\right)\right\}$, we have:
\begin{align}
	Rd&\leq I\left(U_{i^*};Y_{m_{i^*}}\right)-I\left(U_{i^*};Y_{e_{i^*}}\right)+\frac{\mathrm{h}\left(\mathcal{P}_E\right)}{N}+R\mathcal{P}_E\\
	&\leq R_e+\frac{\mathrm{h}\left(\mathcal{P}_E\right)}{N}+R\mathcal{P}_E.
	\end{align}
So, noting that $\mathcal{C}_s$ is the maximum value of $R$ when $d$ approaches to $1$, we have $\mathcal{C}_s\leq R_e+\frac{\mathrm{h}\left(\mathcal{P}_E\right)}{N}+R\mathcal{P}_E$. Consequently, $\mathcal{C}_s\leq R_e$ and the proof is completed.}
\section{Proof of Lemma \ref{lemma1}}\label{app-lemma1}
By considering the Gaussian case, we compute the values of  mutual information $I(U;S)$ and $I(U;Y_e)$ as follows:
\begin{align}
	&I(U;S)=I(X+\alpha S;S)\\
	&=H(X+\alpha S)+H(S)-H(X+\alpha S,S)\\\nonumber
	&=\log\left(\left(2\pi\mathrm{e}\right)^2\left(P+\alpha^2Q\right)Q\right)\\
	&\quad-\log\left(\left(2\pi\mathrm{e}\right)^2\left(\left(P+\alpha^2Q\right)Q-\alpha^2Q^2\right)\right)\\
	&=\log\left(\frac{P+\alpha^2Q}{P}\right).
\end{align}
and 
\begin{align}
	&I(U;Y_e)=I(X+\alpha S;X+S+Z_e)\\
	&=H(X+S+Z_e)-H(X+S+Z_e|X+\alpha S)\\\nonumber
	&=H(X+S+Z_e)+H(X+\alpha S)\\
	&\quad-H(X+S+Z_e,X+\alpha S)\\\nonumber
	&=\log\left(\left(2\pi\mathrm{e}\right)^2\left(P+Q+N_e\right)\left(P+\alpha^2Q\right)\right)\\
	&\quad-\log\left(\left(2\pi\mathrm{e}\right)^2\det\left(\mathrm{cov}\left(X+S+Z_e,X+\alpha S\right)\right)\right)\\
	&=\log\left(\frac{\left(P+Q+N_e\right)\left(P+\alpha^2Q\right)}{PQ\left(1-\alpha\right)^2+N_e\left(P+\alpha^2Q\right)}\right).
\end{align}
Now we consider the inequality $I(U;S)\ge I(U;Y_e)$, and we have:
\begin{align}
&I(U;S)\ge I(U;Y_e)\\
&\log\left(\frac{P+\alpha^2Q}{P}\right)\ge\log\left(\frac{\left(P+Q+N_e\right)\left(P+\alpha^2Q\right)}{PQ\left(1-\alpha\right)^2+N_e\left(P+\alpha^2Q\right)}\right)\\
&\frac{1}{P}>\frac{\left(P+Q+N_e\right)}{PQ\left(1-\alpha\right)^2+N_e\left(P+\alpha^2Q\right)}\\
&\alpha^2 Q(P+N_e)-2PQ\alpha-P^2>0.
\end{align} 
Therefore, $\alpha\ge\frac{P}{P+N_e}\left(1+\sqrt\frac{P+Q+N_e}{N_e}\right)=\alpha_0$
or $\alpha\le\frac{P}{P+N_e}\left(1-\sqrt\frac{P+Q+N_e}{N_e}\right)=\alpha_{-0}$. Similarly, for inequality $I(U;S)< I(U;Y_e)$ we have: $\frac{P}{P+N_e}\left(1-\sqrt\frac{P+Q+N_e}{N_e}\right)=\alpha_{-0}<\alpha<\frac{P}{P+N_e}\left(1+\sqrt\frac{P+Q+N_e}{N_e}\right)=\alpha_0$.
\section{Proof of Theorem \ref{thm-sc}}\label{app-thm-sc}
Here, exploiting the same coding method used in \cite{chen2008wiretap,chia2012wiretap,liu2007wiretap} to achieve the SC, we prove Theorem \ref{thm-sc} considering the known fading coefficients $h_m$ and $h_e$. To this end, we assume $U_m=X+\alpha_mS$ and $U_e=X+\alpha_eS$ as used for generalized dirty paper coding \cite{costa1983writing}, where $X$ and $S$ are independent RVs distributed according to $\mathcal{N}\left(0,P\right)$ and $\mathcal{N}\left(0,Q\right)$, respectively, and $\alpha_m$ and $\alpha_e$ are parameters to be later determined. So, by considering \eqref{eq-ym}, we compute the values of  mutual information $I(U_m;Y_m)$ and $I(U_m;S)$ as follows:
\begin{align}
	&I(U_m;Y_m)=I(X+\alpha_mS;h_mX+S+Z_m)\\
	&=H(h_mX+S+Z_m)-H(h_mX+S+Z_m|X+\alpha_mS)\\\nonumber
	&=H(h_mX+S+Z_m)+H(X+\alpha_mS)\\
	&\quad-H(h_mX+S+Z_m,X+\alpha_mS)\\\nonumber
	&=\log\left(\left(2\pi\mathrm{e}\right)^2\left(|h_m|^2P+Q+N_m\right)\left(P+\alpha_m^2Q\right)\right)\\
	&\quad-\log\left(\left(2\pi\mathrm{e}\right)^2\det\left(\mathrm{cov}\left(h_mX+S+Z_m,X+\alpha_mS\right)\right)\right)\\
	&=\log\left(\frac{\left(|h_m|^2P+Q+N_m\right)\left(P+\alpha_m^2Q\right)}{PQ\left(1-|h_m|\alpha_m\right)^2+N_m\left(P+\alpha_m^2Q\right)}\right)\label{eq-Iy},
\end{align}
and
\begin{align}
	&I(U_m;S)=I(X+\alpha_mS;S)\\
	&=H(X+\alpha_mS)+H(S)-H(X+\alpha_mS,S)\\\nonumber
	&=\log\left(\left(2\pi\mathrm{e}\right)^2\left(P+\alpha_m^2Q\right)Q\right)\\
	&\quad-\log\left(\left(2\pi\mathrm{e}\right)^2\left(\left(P+\alpha_m^2Q\right)Q-\alpha_m^2Q^2\right)\right)\\
	&=\log\left(\frac{P+\alpha_m^2Q}{P}\right)\label{eq-Is}.
\end{align}
Now, by inserting \eqref{eq-Iy} and \eqref{eq-Is} into $\mathcal{C}_s$ defined in Corollary \ref{col-cm}, we have:
\begin{align}
	\mathcal{C}_m(\alpha_m)=\log\left(\frac{P\left(|h_m|^2P+Q+N_m\right)}{PQ\left(1-|h_m|\alpha_m\right)^2+N_m\left(P+\alpha_m^2Q\right)}\right).
\end{align}
Consequently, by maximizing $\mathcal{C}_m(\alpha_m)$ over $\alpha_m$, the SC $\mathcal{C}_s$ can be obtained as:
\begin{align}
	\mathcal{C}_s&=\underset{\alpha_m}{\max}\;\mathcal{C}_m(\alpha_m)\\
	&=\log\left(1+\frac{|h_m|^2P}{N_m}\right)\\
	&=\log\left(1+\gamma_m\right).
\end{align}
In order to prove the SC $\mathcal{C}_s$ for the second condition, we need to calculate the values of  mutual information $I(U_e;Y_m)$ and $I(U_e;Y_e)$. Similarly, by considering $\eqref{eq-ye}$ and $U_e=X+\alpha_eS$, we have:
\begin{align}\label{eq-Iym}
	I(U_e;Y_m)=\log\left(\frac{\left(|h_m|^2P+Q+N_m\right)\left(P+\alpha_e^2Q\right)}{PQ\left(1-|h_m|\alpha_e\right)^2+N_m\left(P+\alpha_e^2Q\right)}\right),
\end{align}
and
\begin{align}\label{eq-Iye}
	I(U_e;Y_e)=\log\left(\frac{\left(|h_e|^2P+Q+N_e\right)\left(P+\alpha_e^2Q\right)}{PQ\left(1-|h_e|\alpha_e\right)^2+N_e\left(P+\alpha_e^2Q\right)}\right).
\end{align}
Now, by substituting \eqref{eq-Iym} and \eqref{eq-Iye} in $\mathcal{C}_s$ defined in Corollary \ref{col-cs}, we obtain:
{\small\begin{align}\nonumber
	&R_e(\alpha_e)=\\
	&\log\hspace{-0.1cm}\left(\frac{\left(|h_m|^2P+Q+N_m\right)\left(PQ\left(1-|h_e|\alpha_e\right)^2+N_e\left(P+\alpha_e^2Q\right)\right)}{\left(|h_e|^2P+Q+N_e\right)\left(PQ\left(1-|h_m|\alpha_e\right)^2+N_m\left(P+\alpha_e^2Q\right)\right)}\right).
\end{align}}
Finally, by maximizing $R_e(\alpha_e)$ over $\alpha_e$, the SC $\mathcal{C}_s$ can be determined as:
\begin{align}
	\mathcal{C}_s=\underset{\alpha_e}{\max}\;R_e(\alpha_e)=\log\left(\frac{1+\bar{\gamma}_{ms}+\gamma_m}{1+\bar{\gamma}_{es}+\gamma_e}\right).
\end{align}
\section{Proof of Theorem \ref{thm-asc}}\label{app-thm-asc}
\textit{Proof of} $\bar{\mathcal{C}}_s^1$: By substituting the joint PDF $f(\gamma_m,\gamma_{e})$ in to \eqref{eq-sc1}, the ASC under condition of the Corollary \ref{col-cm} can be determined as:
\begin{align}\nonumber
\bar{\mathcal{C}}_s^1&=\frac{1}{\bar{\gamma}_m\bar{\gamma}_e}\int_{0}^{\infty}\int_{0}^{\bar{\gamma}}\log_2\left(1+\gamma_m\right)\mathrm{e}^{-\frac{\gamma_m}{\bar{\gamma}_m}-\frac{\gamma_e}{\bar{\gamma}_e}}\\
&\quad\times\Big[1+\theta\big(1-2e^{-\frac{\gamma_m}{\bar{\gamma}_m}}\big)\big(1-2e^{-\frac{\gamma_e}{\bar{\gamma}_e}}\big)\Big]d\gamma_{e}d\gamma_m\\\nonumber
&=\frac{1}{\bar{\gamma}_{m}\bar{\gamma}_e}\int_{0}^{\infty}\int_{0}^{\bar{\gamma}}\mathrm{e}^{-\frac{\gamma_m}{\bar{\gamma}_m}-\frac{\gamma_e}{\bar{\gamma}_e}}\log_2\left(1+\gamma_{m}\right)d\gamma_{e}d\gamma_m\\\nonumber
&+\theta\Bigg[\frac{1}{\bar{\gamma}_{m}\bar{\gamma}_e}\int_{0}^{\infty}\int_{0}^{\bar{\gamma}}\mathrm{e}^{-\frac{\gamma_m}{\bar{\gamma}_m}-\frac{\gamma_e}{\bar{\gamma}_e}}\log_2\left(1+\gamma_{m}\right)d\gamma_{e}d\gamma_m\\\nonumber
&\quad-\frac{2}{\bar{\gamma}_{m}\bar{\gamma}_e}\int_{0}^{\infty}\int_{0}^{\bar{\gamma}}\mathrm{e}^{-\frac{2\gamma_m}{\bar{\gamma}_m}-\frac{\gamma_e}{\bar{\gamma}_e}}\log_2\left(1+\gamma_{m}\right)d\gamma_{e}d\gamma_m\\\nonumber
&\quad-\frac{2}{\bar{\gamma}_{m}\bar{\gamma}_e}\int_{0}^{\infty}\int_{0}^{\bar{\gamma}}\mathrm{e}^{-\frac{\gamma_m}{\bar{\gamma}_m}-\frac{2\gamma_e}{\bar{\gamma}_e}}\log_2\left(1+\gamma_{m}\right)d\gamma_{e}d\gamma_m\\
&\quad+\frac{4}{\bar{\gamma}_{m}\bar{\gamma}_e}\int_{0}^{\infty}\int_{0}^{\bar{\gamma}}\mathrm{e}^{-\frac{2\gamma_m}{\bar{\gamma}_m}-\frac{2\gamma_e}{\bar{\gamma}_e}}\log_2\left(1+\gamma_{m}\right)d\gamma_{e}d\gamma_m\Bigg]\\
&=\mathcal{W}_1+\theta\left[\mathcal{W}_1-2\mathcal{W}_2-2\mathcal{W}_3+4\mathcal{W}_4\right],
\end{align}
where the integrals $\mathcal{W}_{\eta}$, for $\eta\in\{1,2,3,4\}$, are in the following format \cite{zwillinger2007table}:
\begin{align}
	&\int_{0}^{\infty}\mathrm{e}^{-\zeta t}\ln\left(1+t\right)dt=\frac{\mathrm{e}^{\zeta}}{\zeta}\mathrm{E}_1\left(\zeta\right)\label{int-11}.
\end{align}
Thus, by exploiting \eqref{int-11}, $\mathcal{W}_{\eta}$ can be computed, and then the proof is completed. 

\textit{Proof of} $\bar{\mathcal{C}}_s^2$: By applying Lemma \ref{lemma-pdf} to ASC definition in \eqref{eq-sc2} and exploiting the linearity rules of integration, \eqref{eq-sc2} can be decomposed as:
\begin{align}\nonumber
	\bar{\mathcal{C}}&=\frac{1}{\bar{\gamma}_m\bar{\gamma}_e}\int_{0}^{\infty}\int_{0}^{\bar{\gamma}}\log_2\left(1+\bar{\gamma}_{ms}+\gamma_m\right)\mathrm{e}^{-\frac{\gamma_m}{\bar{\gamma}_m}-\frac{\gamma_e}{\bar{\gamma}_e}}\\\nonumber
	&\times\Big[1+\theta\big(1-2e^{-\frac{\gamma_m}{\bar{\gamma}_m}}\big)\big(1-2e^{-\frac{\gamma_e}{\bar{\gamma}_e}}\big)\Big]d\gamma_{e}d\gamma_m\\\nonumber
	&-\frac{1}{\bar{\gamma}_m\bar{\gamma}_e}\int_{0}^{\infty}\int_{0}^{\bar{\gamma}}\log_2\left(1+\bar{\gamma}_{es}+\gamma_e\right)\mathrm{e}^{-\frac{\gamma_m}{\bar{\gamma}_m}-\frac{\gamma_e}{\bar{\gamma}_e}}\\\nonumber
	&\times\Big[1+\theta\big(1-2e^{-\frac{\gamma_m}{\bar{\gamma}_m}}\big)\big(1-2e^{-\frac{\gamma_e}{\bar{\gamma}_e}}\big)\Big]d\gamma_{e}d\gamma_m\\
	&=\mathcal{K}-\mathcal{M},
\end{align}
where $\bar{\gamma}=\gamma_m+\bar{\gamma}_{ms}-\bar{\gamma}_{es}$, 
\begin{align}\nonumber
&\mathcal{K}=\frac{1}{\bar{\gamma}_m\bar{\gamma}_e}\int_{0}^{\infty}\int_{0}^{\bar{\gamma}}\log_2\left(1+\bar{\gamma}_{ms}+\gamma_m\right)\mathrm{e}^{-\frac{\gamma_m}{\bar{\gamma}_m}-\frac{\gamma_e}{\bar{\gamma}_e}}d\gamma_{e}d\gamma_m\\\nonumber
&+\theta\Bigg[\frac{1}{\bar{\gamma}_m\bar{\gamma}_e}\int_{0}^{\infty}\int_{0}^{\bar{\gamma}}\log_2\left(1+\bar{\gamma}_{ms}+\gamma_m\right)\mathrm{e}^{-\frac{\gamma_m}{\bar{\gamma}_m}-\frac{\gamma_e}{\bar{\gamma}_e}}d\gamma_{e}d\gamma_m\\\nonumber
&-\frac{2}{\bar{\gamma}_m\bar{\gamma}_e}\int_{0}^{\infty}\int_{0}^{\bar{\gamma}}\log_2\left(1+\bar{\gamma}_{ms}+\gamma_m\right)\mathrm{e}^{-\frac{2\gamma_m}{\bar{\gamma}_m}-\frac{\gamma_e}{\bar{\gamma}_e}}d\gamma_{e}d\gamma_m\\\nonumber
&-\frac{2}{\bar{\gamma}_m\bar{\gamma}_e}\int_{0}^{\infty}\int_{0}^{\bar{\gamma}}\log_2\left(1+\bar{\gamma}_{ms}+\gamma_m\right)\mathrm{e}^{-\frac{\gamma_m}{\bar{\gamma}_m}-\frac{2\gamma_e}{\bar{\gamma}_e}}d\gamma_{e}d\gamma_m\\
&+\frac{4}{\bar{\gamma}_m\bar{\gamma}_e}\int_{0}^{\infty}\int_{0}^{\bar{\gamma}}\log_2\left(1+\bar{\gamma}_{ms}+\gamma_m\right)\mathrm{e}^{-\frac{2\gamma_m}{\bar{\gamma}_m}-\frac{2\gamma_e}{\bar{\gamma}_e}}d\gamma_{e}d\gamma_m\Bigg]\\
&=\mathcal{K}_1+\theta\left[\mathcal{K}_1-2\mathcal{K}_2-2\mathcal{K}_3+4\mathcal{K}_4\right],
\end{align}
and
\begin{align}\nonumber
&\mathcal{M}=\frac{1}{\bar{\gamma}_m\bar{\gamma}_e}\int_{0}^{\infty}\int_{0}^{\bar{\gamma}}\log_2\left(1+\bar{\gamma}_{es}+\gamma_e\right)\mathrm{e}^{-\frac{\gamma_m}{\bar{\gamma}_m}-\frac{\gamma_e}{\bar{\gamma}_e}}d\gamma_{e}d\gamma_m\\\nonumber
&+\theta\Bigg[\frac{1}{\bar{\gamma}_m\bar{\gamma}_e}\int_{0}^{\infty}\int_{0}^{\bar{\gamma}}\log_2\left(1+\bar{\gamma}_{es}+\gamma_e\right)\mathrm{e}^{-\frac{\gamma_m}{\bar{\gamma}_m}-\frac{\gamma_e}{\bar{\gamma}_e}}d\gamma_{e}d\gamma_m\\\nonumber
&-\frac{2}{\bar{\gamma}_m\bar{\gamma}_e}\int_{0}^{\infty}\int_{0}^{\bar{\gamma}}\log_2\left(1+\bar{\gamma}_{es}+\gamma_e\right)\mathrm{e}^{-\frac{2\gamma_m}{\bar{\gamma}_m}-\frac{\gamma_e}{\bar{\gamma}_e}}d\gamma_{e}d\gamma_m\\\nonumber
&-\frac{2}{\bar{\gamma}_m\bar{\gamma}_e}\int_{0}^{\infty}\int_{0}^{\bar{\gamma}}\log_2\left(1+\bar{\gamma}_{es}+\gamma_e\right)\mathrm{e}^{-\frac{\gamma_m}{\bar{\gamma}_m}-\frac{2\gamma_e}{\bar{\gamma}_e}}d\gamma_{e}d\gamma_m\\
&+\frac{4}{\bar{\gamma}_m\bar{\gamma}_e}\int_{0}^{\infty}\int_{0}^{\bar{\gamma}}\log_2\left(1+\bar{\gamma}_{es}+\gamma_e\right)\mathrm{e}^{-\frac{2\gamma_m}{\bar{\gamma}_m}-\frac{2\gamma_e}{\bar{\gamma}_e}}d\gamma_{e}d\gamma_m\Bigg]\\
&=\mathcal{M}_1+\theta\left[\mathcal{M}_1-2\mathcal{M}_2-2\mathcal{M}_3+4\mathcal{M}_4\right].
\end{align}\
The integrals $\mathcal{K}_{\eta}$ and $\mathcal{M}_{\eta}$, for $\eta\in\{1,2,3,4\}$, are in the following formats \cite{zwillinger2007table}:
\begin{align}\nonumber
&\int_{}^{}\mathrm{e}^{-\zeta t}\ln\left(1+\kappa+t\right)dt\\
&=-\frac{1}{\zeta}\left[\mathrm{e}^{\zeta(1+\kappa)}\mathrm{E}_1\left(\zeta\left(1+\kappa+t\right)\right)+\mathrm{e}^{-\zeta t}\ln\left(1+\kappa+t\right)\right]\label{int-1},
\end{align}
\begin{align}\nonumber
&\int_{0}^{\infty}\mathrm{e}^{-\zeta t}\ln\left(1+\kappa+t\right)dt\\
&=\frac{1}{\zeta}\left[\mathrm{e}^{\zeta(1+\kappa)}\mathrm{E}_1\left(\zeta\left(1+\kappa\right)\right)+\ln\left(1+\kappa\right)\right]\label{int-2},
\end{align}
\begin{align}
	&\int_{0}^{\infty}\mathrm{e}^{-\zeta t}\mathrm{E}_1\left(\delta+\nu t\right)=\frac{1}{\zeta}\left[\mathrm{E}_1\left(\delta\right)-\mathrm{e}^{\frac{\zeta\delta}{\nu}}\mathrm{E}_1\left(\frac{\delta(\zeta+\nu)}{\nu}\right)\right]\label{int-4}.
\end{align}
Therefore, by utilizing the integral formats in \eqref{int-1}--\eqref{int-4} and conducting the required simplifications, $\mathcal{K}$, $\mathcal{M}$, and then $\bar{\mathcal{C}}_s^{2}$ are computed. 
\section{Proof of Theorem \ref{thm-sop}}\label{app-thm-sop}
First, we prove the SOP under condition of the Corollary \ref{col-cm}. By applying Lemma \ref{lemma-pdf} to \eqref{eq-sop-1}, $P_{sop}^1$ can be rewritten as: 
\begin{align}\nonumber
	P_{sop}^{1}&=1-\int_{0}^{\infty}\int_{2^{R_s}-1}^{\infty}\frac{e^{-\frac{\gamma_m}{\bar{\gamma}_m}-\frac{\gamma_e}{\bar{\gamma}_e}}}{\bar{\gamma}_m\bar{\gamma}_e}\\
	&\quad\times\left[1+\theta\left(1-2e^{-\frac{\gamma_m}{\bar{\gamma}_m}}\right)\left(1-2e^{-\frac{\gamma_e}{\bar{\gamma}_e}}\right)\right]d\gamma_m d\gamma_e\\\nonumber
	&=1-\Bigg[\int_{0}^{\infty}\int_{2^{R_s}-1}^{\infty}\frac{\mathrm{e}^{-\frac{\gamma_m}{\bar{\gamma}_m}-\frac{\gamma_e}{\bar{\gamma}_e}}}{\bar{\gamma}_m\bar{\gamma}_e}d\gamma_m d\gamma_e\\\nonumber
	&\quad+\theta\Bigg(\int_{0}^{\infty}\int_{2^{R_s}-1}^{\infty}\frac{\mathrm{e}^{-\frac{\gamma_m}{\bar{\gamma}_m}-\frac{\gamma_e}{\bar{\gamma}_e}}}{\bar{\gamma}_m\bar{\gamma}_e}d\gamma_m d\gamma_e\\\nonumber
	&\quad-2\int_{0}^{\infty}\int_{2^{R_s}-1}^{\infty}\frac{\mathrm{e}^{-\frac{2\gamma_m}{\bar{\gamma}_m}-\frac{\gamma_e}{\bar{\gamma}_e}}}{\bar{\gamma}_m\bar{\gamma}_e}d\gamma_m d\gamma_e\\\nonumber
	&\quad-2\int_{0}^{\infty}\int_{2^{R_s}-1}^{\infty}\frac{\mathrm{e}^{-\frac{\gamma_m}{\bar{\gamma}_m}-\frac{2\gamma_e}{\bar{\gamma}_e}}}{\bar{\gamma}_m\bar{\gamma}_e}d\gamma_m d\gamma_e\\
	&\quad+4\int_{0}^{\infty}\int_{2^{R_s}-1}^{\infty}\frac{\mathrm{e}^{-\frac{2\gamma_m}{\bar{\gamma}_m}-\frac{2\gamma_e}{\bar{\gamma}_e}}}{\bar{\gamma}_m\bar{\gamma}_e}d\gamma_m d\gamma_e\Bigg)\Bigg]\\
	&=1-\left[\mathcal{I}_1+\theta\left(\mathcal{I}_1-2\mathcal{I}_2-2\mathcal{I}_3+4\mathcal{I}_4\right)\right],
\end{align}
where the integrals $\mathcal{I}_{\eta}$, for $\eta\in\{1,2,3,4\}$ can be solved easily by performing simple calculations, and then $P_{sop}^1$ is obtained as \eqref{eq-sop1}.

Similarly, by inserting the joint PDF $f(\gamma_m,\gamma_{e})$ into \eqref{eq-sop2}, $P_{sop}^2$ can be obtained as:
\begin{align}\nonumber
	P_{sop}^{2}&=1-\int_{0}^{\infty}\int_{\gamma_{th}}^{\infty}\frac{e^{-\frac{\gamma_m}{\bar{\gamma}_m}-\frac{\gamma_e}{\bar{\gamma}_e}}}{\bar{\gamma}_m\bar{\gamma}_e}\\
	&\quad\times\left[1+\theta\left(1-2e^{-\frac{\gamma_m}{\bar{\gamma}_m}}\right)\left(1-2e^{-\frac{\gamma_e}{\bar{\gamma}_e}}\right)\right]d\gamma_m d\gamma_e\\\nonumber
	&=1-\Bigg[\int_{0}^{\infty}\int_{\gamma_{th}}^{\infty}\frac{\mathrm{e}^{-\frac{\gamma_m}{\bar{\gamma}_m}-\frac{\gamma_e}{\bar{\gamma}_e}}}{\bar{\gamma}_m\bar{\gamma}_e}d\gamma_m d\gamma_e\\\nonumber
	&\quad+\theta\Bigg(\int_{0}^{\infty}\int_{\gamma_{th}}^{\infty}\frac{\mathrm{e}^{-\frac{\gamma_m}{\bar{\gamma}_m}-\frac{\gamma_e}{\bar{\gamma}_e}}}{\bar{\gamma}_m\bar{\gamma}_e}d\gamma_m d\gamma_e\\\nonumber
	&\quad-2\int_{0}^{\infty}\int_{\gamma_{th}}^{\infty}\frac{\mathrm{e}^{-\frac{2\gamma_m}{\bar{\gamma}_m}-\frac{\gamma_e}{\bar{\gamma}_e}}}{\bar{\gamma}_m\bar{\gamma}_e}d\gamma_m d\gamma_e\\\nonumber
	&\quad-2\int_{0}^{\infty}\int_{\gamma_{th}}^{\infty}\frac{\mathrm{e}^{-\frac{\gamma_m}{\bar{\gamma}_m}-\frac{2\gamma_e}{\bar{\gamma}_e}}}{\bar{\gamma}_m\bar{\gamma}_e}d\gamma_m d\gamma_e\\
	&\quad+4\int_{0}^{\infty}\int_{\gamma_{th}}^{\infty}\frac{\mathrm{e}^{-\frac{2\gamma_m}{\bar{\gamma}_m}-\frac{2\gamma_e}{\bar{\gamma}_e}}}{\bar{\gamma}_m\bar{\gamma}_e}d\gamma_m d\gamma_e\Bigg)\Bigg]\\
	&=1-\left[\mathcal{J}_1+\theta\left(\mathcal{J}_1-2\mathcal{J}_2-2\mathcal{J}_3+4\mathcal{J}_4\right)\right],
\end{align}
where $\bar{\gamma}_{th}=\left(2^{R_s}\left(1+\bar{\gamma}_{es}\right)-\left(1+\bar{\gamma}_{ms}\right)\right)$. The integrals $\mathcal{J}_{\eta}$, for ${\eta}\in\{1,2,3,4\}$, can be computed by some simple calculations. Then, after conducting the required simplifications, the proof is completed. 

\bibliographystyle{IEEEtran}
\bibliography{sample.bib}

\end{document}